\let\my@saved@original@eqref\eqref
\renewcommand*{\eqref}[1]{\ignorespaces
	\begingroup
	\let\normalfont\relax
	\my@saved@original@eqref{#1}\ignorespaces
	\endgroup
}
\begin{document}

\def\cal#1{\mathcal{#1}}
\newcommand{\mbb}{\mathbb}
\def\mbb#1{\mathbb{#1}}

\def\be{\begin{equation*}}
\def\ee{\end{equation*}}

\newcommand{\wk}{(\Omega,\cal A,\mbb P)}
\newcommand{\J}{J}
\newcommand{\mbf}[1]{{\mathbf #1}}
\newcommand{\bs}[1]{{\boldsymbol #1}}
\def\1#1{\mathbbm 1_{\{#1\}}}
\theoremstyle{plain}
\newtheorem{theorem}{Theorem}[section]
\newtheorem{cor}[theorem]{Corollary}
\newtheorem{lemma}[theorem]{Lemma}
\newtheorem{proposition}[theorem]{Proposition}
\newtheorem{axiom}{Axiom}
\newtheorem{openprob}{Open problem}

\theoremstyle{definition}
\newtheorem{assumption}[theorem]{Assumption}
\newtheorem{definition}[theorem]{Definition}
\newtheorem{example}[theorem]{Example}
\newtheorem{remark}[theorem]{Remark}
\newtheorem{algorithm}[theorem]{Algorithm}

\renewcommand{\d}{{\text{d}}}

\newcommand{\VaR}{\text{VaR}}
\newcommand{\TCE}{\text{TCE}}
\newcommand{\EL}{\text{EL}}
\newcommand{\Loss}{L}
\newcommand{\myparagraph}[1]{\textbf{#1}}

\graphicspath{./}	
	
\articletype{~}

\author[1]{Imke Redeker}
\author[2]{Ralf Wunderlich}
\runningauthor{Redeker and Wunderlich}
\affil[1]{
German Rheumatism Research Centre, Epidemiology Unit,\newline Charitéplatz 1, 10117 Berlin, Germany, 
\texttt{imke.redeker@drfz.de }}
\affil[2]{Institute of Mathematics, Brandenburg University of Technology \newline Cottbus - Senftenberg, Postbox 101344, 03013 Cottbus, Germany, \texttt{ralf.wunderlich@b-tu.de}\newline
The authors thank Michaela Szölgyenyi (Klagenfurt) and Rüdiger Frey (Vienna) for valuable discussions on the topic of the paper. }

\title{Credit risk with asymmetric information and a switching default threshold}
\runningtitle{ Credit risk with asymmetric information and a switching default threshold}
\subtitle{~}
\abstract{
We investigate the impact of available information on the estimation of the default probability within a generalized structural model for credit risk. The traditional structural model where default is triggered when the value of the firm's asset falls below a constant threshold is extended by relaxing the assumption of a constant default threshold. The default threshold at which the firm is liquidated is modeled as a random variable whose value is chosen by the management of the firm and dynamically adjusted to account for changes in the economy or the appointment of a new firm management.
Investors on the market have no access to the value of the threshold and only anticipate the distribution of the threshold. We distinguish different information levels on the firm's assets and derive explicit formulas for the conditional default probability given these information levels. Numerical results indicate that the information level has a considerable impact on the estimation of the default probability and the associated credit yield spread.
}
\classification[MSC]{Primary 91G40;  Secondary 91B70, 91G50}
\keywords{Credit risk, Structural model, Asymmetric information, Switching default threshold}

\journalname{}
  \journalyear{~}
\journalvolume{}
\journalissue{}
\startpage{1}

\maketitle

\section{Introduction}
Credit risk, or default risk, is the risk that a financial loss will be incurred if a counterparty does not fulfill its contractually agreed financial obligations in a timely manner. Quantitative credit risk models for measuring, monitoring and managing credit risk have become central in today's complex financial industry. The recent financial crisis has impressively demonstrated the need for effective credit risk management. Since then the evaluation of credit risk has been receiving increasing attention. For credit risk analysis it is crucial both from a theoretical and an empirical point of view to model the default of a default risky asset, i.e., a security that has a nonzero probability of defaulting on its contracted payments, and to forecast the associated default probability. A typical example of a credit risky asset is a corporate bond. A corporate bond promises its holder a fixed stream of payments but may default on its promise.\\ 
There are two classical types of modeling approaches for credit risk: the structural one and the reduced-form one. The structural approach is considered by \textsc{Black \& Scholes} \cite{black}, \textsc{Merton} \cite{merton_cr} and \textsc{Black \& Cox} \cite{BlackCox}, among others. It provides a relationship between default risk and capital structure by using the evolution of the firm’s assets value to determine the time of default, i.e., the default event of a bond is triggered when the assets of the firm who issued the bond fall below some threshold. The important feature of the structural model is that it implicitly assumes that the modeler has complete knowledge about the dynamics of the firm's assets and the situation that will trigger the default event (i.e., the firm's liabilities). Despite the convincing economic interpretation in terms of the firm's assets and liabilities there are shortcomings when the firm's assets are modeled by a continuous-time asset value process. One is that credit yield spreads go to zero as maturity goes to zero regardless of the riskiness of the firm. This results from the investors' knowledge about the firm's true distance to default. Such credit spreads are uncommon in practice. Another disadvantage of the structural approach is that forecast bond prices continuously converge to their recovery value (the payment which is received if default occurs before maturity) which contradicts the price jump at default in empirical studies.
These issues do not occur in the second approach, the reduced-form approach, which is considered by \textsc{Jarrow \& Turnbull} \cite{jarrow_turnbull}, \textsc{Artzner \& Delbaen} \cite{ArtznerDelbaen}, and \textsc{Duffie \& Singleton} \cite{DuffieSingleton}, among others. It treats the dynamics of default as an exogenous event. This implies knowledge of a less detailed information set compared to the structural approach and credit spreads become in general more realistic and are easier to quantify. Another advantage is that the reduced-form approach has proven to be very useful for the valuation of credit-sensitive securities. However, the approach is lacking economic insights as it does not connect credit risk to underlying structural variables.
To gain both the economic appeal of the structural approach and the empirical plausibility and the tractability of the reduced-form approach, structural models can be transformed into reduced-form models by changing its information set to a less refined one (see \textsc{Jarrow \& Protter} \cite{jarrowprotter}). One way is to model the default barrier as a random variable which is unobservable by bond investors (see \textsc{Lando} \cite{lando}, \textsc{Giesecke \& Goldberg} \cite{giesecke-goldberg}, \textsc{Hillairet \& Jiao} \cite{hillairet-jiao}). Another way is to assume that the firm's assets are only partially observable by investors (see \textsc{Duffie \& Lando} \cite{duffie-lando}, \textsc{Jeanblanc \& Valchev} \cite{jeanblanc-valchev}, \textsc{Lakner \& Liang} \cite{lakner-liang}).\\
This paper extends the traditional structural model to a dynamic setting by relaxing the assumption of a constant default threshold to the case of a piecewise constant threshold. Motivated by default events during the financial crisis where a firm's management has decided to close activities and the default occurs although the firm is in a relatively healthy situation (see \textsc{Hillairet \& Jiao} \cite{hillairet-jiao}), the default threshold is modeled as a random variable whose value is chosen by the management of the firm and adjusted dynamically to react to changes in the economic environment or to account for the election of a new firm management. In literature, this generalization of the default model to a dynamic setting was proposed by \textsc{Blanchet-Scalliet, Hillairet \& Jiao} \cite{blanchet-hillairet}. The authors study the information accessible to the management of the firm and obtain explicit formulations for the survival probability given the information of the management by using a successive enlargement framework. Our approach is different and related to ordinary investors on the market who do not have access to the value of the threshold and only anticipate the distribution of the threshold. The objective is to analyze the impact of available information of public bond investors on the estimation of the default probability. We consider an investor who continuously observes the firm value and an investor who only observes the firm value at discrete dates. Explicit formulas for the default probabilities and associated credit yield spreads given the different information levels on the firm's assets are derived and, based on these formulas, as a direct application the valuation of bond prices is considered. Numerical examples are presented to illustrate and compare the conditional default probabilities and credit spreads.\\
The remainder of this paper is organized as follows. Section~\ref{sec:model} sets up a model for credit risk based on a structural model but with an unobservable default barrier that is allowed to switch. In this setting different information structures are distinguished. The impact of asymmetric information on the default probability is studied in Section~\ref{sec:con_prob} where explicit formulas for the conditional survival probabilities given the different information structures are derived. Section~\ref{sec:numeric} provides numerical results and a conclusion is given in Section~\ref{sec:conclusio}. Finally, the Appendix contains proofs omitted from the main text.

\section{Model for the default event}\label{sec:model}
We introduce a structural model for credit risk where short-term default risk is included by making the default barrier unobservable. Further, the usual assumption of a constant default barrier is relaxed allowing the firm's management to adjust the barrier. 
\subsection{Default barrier}\label{subsec:barrier}
\begin{figure}[b!]
	\centering
	\captionsetup{format=hang, justification=centerfirst, textfont=normalsize, labelfont=normalsize}
	\begin{subfigure}[b]{0.48\textwidth}
		\centering
		\includegraphics[width=\textwidth]{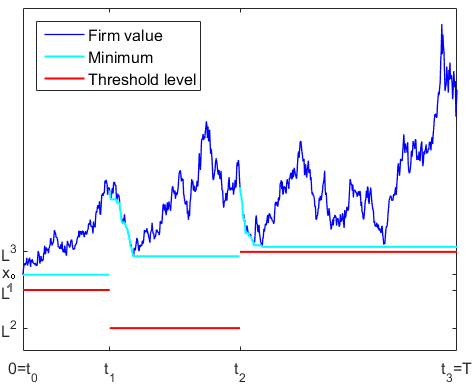}
		\caption{No default}
	\end{subfigure}
	\quad
	\begin{subfigure}[b]{0.48\textwidth}  
		\centering 
		\includegraphics[width=\textwidth]{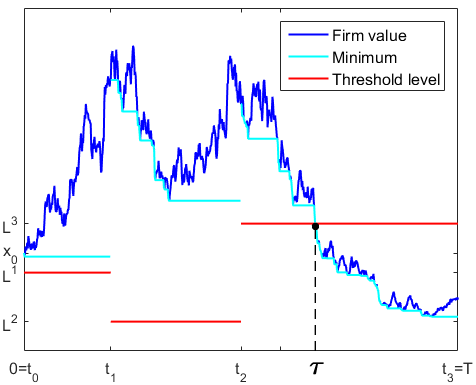}
		\caption{Default}
	\end{subfigure}
	\captionsetup{format=hang, justification=raggedright, textfont=normalsize, labelfont=normalsize}
	\caption[Firm value, running minimum process and switching barrier.]{Plot of two trajectories of the firm's asset process $X_t$, the associated running minimum process $M_{[t_{i-1},t)}$ and the default thresholds $L^i$.}\label{fig:model_setup_1}
\end{figure}
The default event is specified in terms of the firm's asset value process $(X_t)_{t\geq 0}$ and the default threshold $L$. Default occurs when the value of the firm decreases to the level of the default barrier for the first time. The random time of default is denoted by $\tau$. Uncertainty in the economy is modeled by some fixed complete probability space $\wk$ which is endowed with the filtration $\mbb F =(\cal F_t)_{t\geq 0}$ generated by the asset process $X$ and satisfying the usual conditions, i.e., $\cal F_t = \sigma(X_s \colon s\leq t) \vee \cal N$, where $\cal N$ denotes the $\mbb P$-null sets. The $\sigma$-algebra $\cal A$ of $\Omega$ describes the information of the market.
We model the evolution of the value $X$ of the firm's assets by a geometric Brownian motion, i.e.,
\begin{align}
\d X_t &= X_t(\mu \d t + \sigma \d B_t), \qquad X_0=x_0,\label{eq:firm_value}
\end{align}
where $(B_t)_{t \ge 0}$ is a $\mbb F$-Brownian motion, $\mu \in \mbb R$ and $\sigma >0$. The solution to \eqref{eq:firm_value} is known to be
\begin{align}
X_t &= x_0 e^{m t + \sigma B_t},\label{eq:firm_value_2}
\end{align}
where $m = \mu - \sigma^2/2$. For the sake of easier notation we assume w.l.o.g. $x_0=1$, i.e., we assume that the asset process starts at 1. We say that a firm defaults when it stops fulfilling a contractual commitment to meet its obligations stated in a financial contract. The firm's management decides whether and when to default. Thus, the management determines the default triggering barrier. The essential difference with a classical structural model is that the management is not constrained to decide on one fixed barrier but can dynamically adjust the default barrier. This reflects the management's possibility to react to changes in the economic environment or the election of a new firm management. The time points at which the management adjusts the default barrier are deterministic and denoted by $t_i$, $i=0,\ldots,n-1$, with $0 = t_0 < t_1<\ldots <t_{n-1}< T$, where $T=:t_n$ is a finite time horizon. The default barrier $L$ can be written as
\be
L_t = \sum_{i=1}^{n} L^i\mathbbm 1_{[t_{i-1},t_{i})},
\ee
where $L^i$, $i = 1,\ldots,n$ are $\cal A$-measurable random variables representing the private information of the management on the default barrier and $\mathbbm 1_A$ denotes the indicator function of the set $A$. Public investors do not have any knowledge on the default barrier except that they know the adjustment time points $t_0,\ldots,t_{n-1}$ and they agree on the joint probability distributions for $L^1,\ldots,L^i$, $i = 1,\ldots,n$, which are denoted by $F_{L^1,\ldots,L^i}$. The associated probability density functions are denoted by $f_{L^1,\ldots,L^i}$ for $i=1,\ldots,n$. We make the assumption that $\mathbf{L}=(L^1,\ldots,L^n)$ is independent of $\cal F_T$. 
The random default time is given by 
$\tau = \inf\{t > 0\colon X_t \leq L_t\}$.
The running minimum asset process is denoted by $M$ and given by
\begin{align}\label{eq:run_min}
M_t = \inf\{X_u \colon 0\leq u < t\}.
\end{align}
Further, the running minimum started from a certain time point $s$ is denoted by
\begin{align}\label{eq:run_min2}
M_{[s,t)} = \inf\{X_u \colon s\leq u<t\}.
\end{align}
For $s=0$ we have $M_{[0,t)}=M_t$. This model setup is illustrated by Figure \ref{fig:model_setup_1} which shows two trajectories of the firm value $X$, $n=3$ default barriers $L^1$, $L^2$ and $L^3$ and the three running minimum processes $M_{[t_{i-1},t_i)}$, for $i=1,2,3$. We consider two scenarios. In the first scenario shown in the left panel occurs no default whereas in the second scenario shown in the right panel default occurs between $t_2$ and $t_3=T$. 

\subsection{Information structure}\label{subsec:info}
We distinguish the following three information struc\-tures.\\
\textbf{{Management's information}}\\
The management has complete information about the firm's asset process $X$ and obtains information on the default threshold $L^i$ at time $t_{i-1}$. Thus, the management's information structure $\mbb G^M = (\cal G^M_t)_{t \in [0,T]}$ is a progressive enlargement of the filtration $\mbb F$ by the default threshold process $(L_t)_{t\geq 0}$, i.e.,
\be
\cal G_t^M = \cal F_t\vee \sigma(L_s,\, s\leq t).
\ee 
This insider information is considered in \textsc{Blanchet-Scalliet, Hillairet \& Jiao} \cite{blanchet-hillairet} and will not be considered in this paper.\\
For public bond investors we distinguish the following two information structures.\\
\textbf{{C-investor's information}}\\
The first type of investors \underline{c}ontinuously observe the firm value and the default in the moment it occurs but they do not have knowledge on the default threshold $L$ (because it is firm inside information of the management).  We call an investor endowed with this information structure a \textit{C-investor}. The C-investor's information structure $\mbb G^C = (\cal G^C_t)_{t \in [0,T]}$ on the bond market is described by a progressive enlargement of the filtration $\mbb F$ by the random default time $\tau$, i.e.,
\be 
\cal G^C_t = \cal F_t\vee \sigma(H_s,\, s\leq t),
\ee
where $H$ is the default indicator process defined by $H_t = \1{t \geq \tau}$. C-investors are uncertain about the firm's true distance to default although they have complete information about the firm value. This uncertainty is due to lacking knowledge on the threshold level. Thus, default arrives as a full surprise.\\
\textbf{{D-investor's information}}\\
The second type of investors observe the asset process only in \underline{d}iscrete  time and are called \textit{D-investors}.  The information structure of a D-investor is similar to the information structure of a C-investor in that both investors do not have any knowledge about the default barrier except that they observe the occurrence and timing of default. The difference is that the asset process $X$ is not completely observable but only at discrete dates denoted by $T_j$, $j=0,\ldots,J-1$, where $0=T_0< T_1<\ldots<T_{\J-1}< T_J=T$ for $\J \in \mbb N$. This is a realistic assumption, since investors usually observe the asset value at the times of corporate news release. The partial information on the asset process is described by a sub-filtration $\mbb F^D = (\cal F^D_t)_{t \in [0,T]}$ of $\mbb F$, where
\be
\cal F^D_t=\begin{cases}
	\cal F_0,  & \text{if } t<T_1,\\
	\sigma(X_{T_1},\ldots,X_{T_i}) & \text{if } T_i\leq t <T_{i+1},\, i\in\{1,\ldots,\J-1\}.
\end{cases}
\ee 
The D-investor's information structure $\mbb G^D = (\cal G^D_t)_{t \in [0,T]}$ can be described by a progressive enlargement of the filtration $\mbb F^D$ by the random default time $\tau$, i.e.,
\be 
\cal G^D_t = \cal F^D_t\vee \sigma(H_s,\, s\leq t).
\ee 
\begin{assumption}\label{assumption_d-investor}
	Every adjustment time $t_i$, $i=0,\ldots,n-1$, of the default barrier coincides with one of the information dates $T_j$, $j=0,\ldots,\J-1$, i.e., $\{t_0,\ldots,t_{n-1}\} \subseteq \{T_0,\ldots,T_{\J-1}\}$.
\end{assumption}

\section{Conditional survival probability}\label{sec:con_prob}
The conditional survival probability, i.e., the probability of not having experienced default by the finite time horizon $T$ given the accessible information, plays an important role in the valuation of credit risky securities (see \textsc{Hillairet \& Jiao} \cite{hillairet-jiao2}). The aim of this section is to derive explicit formulas for the conditional survival probability given the information of a C-investor, i.e., $\mbb P(\tau > T|\cal G^C_t)$, and of a D-investor, i.e., $\mbb P(\tau > T|\cal G^D_t)$.\\
We begin with reviewing classical results of the running minimum of a geometric Brownian motion.
\begin{lemma}\label{distr_runmin}	
	Let $X=(X_t)_{t\ge 0}$ be the asset value process given in \eqref{eq:firm_value} with $X_0=1$ and $M=(M_t)_{t\geq 0}$ the running minimum process defined in \eqref{eq:run_min}. 
	\begin{enumerate}
	\item Given $t >0$ then the density function $f^M_{t}(\cdot)$ of $M_t$ is given by
	\begin{align}\label{eq:density_min}
	f^M_{t}(w)=&\frac{1}{\sigma w \sqrt{t}}\,\varphi \Big(\frac{mt-\frac{\ln{w}}{\sigma}}{\sqrt{t}}\Big)+\frac{e^{2m\frac{\ln{w}}{\sigma}}}{\sigma w \sqrt{t}}\, \varphi\Big(\frac{mt+\frac{\ln{w}}{\sigma}}{\sqrt{t}}\Big)+\frac{2m}{w \sigma}e^{2m\frac{\ln{w}}{\sigma}}\,\varPhi \Big(\frac{mt+\frac{\ln{w}}{\sigma}}{\sqrt{t}}\Big)
	\end{align}
	for $w\in (0,1]$ and zero otherwise, where $\varPhi$ and $\varphi$ denote the cumulative distribution function and the probability density function of the standard normal distribution, respectively. 
	\item Given $t >0$ then the joint probability density function $f^{M,X}_{t}(\cdot, \cdot)$ of $X_t$ and $M_t$ is given by
	\begin{align}\label{eq:joint_density_min_x}
	f^{M,X}_{t}(u,v)= \frac{2v^{m/\sigma^2 -1}\ln(v/u^2)}{\sigma^3 \sqrt{2\pi}t^{3/2}u} e^{-\frac{m^2t}{2\sigma^2}}e^{-\frac{\ln^2(v/u^2)}{2\sigma^2 t}}
	\end{align}
	for $u\in (0,1]$, $u\leq v$, and zero otherwise.	
	\end{enumerate}
\end{lemma}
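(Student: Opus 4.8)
The plan is to push everything through the logarithmic representation \eqref{eq:firm_value_2} and thereby reduce the two claims to the law of the running minimum, and the joint law of the minimum and the endpoint, of a Brownian motion with drift, both of which are accessible through the reflection principle together with a Girsanov change of measure. Write $W_u = m u + \sigma B_u$, so that $X_u = e^{W_u}$ by \eqref{eq:firm_value_2}; since $x \mapsto e^x$ is strictly increasing, the running minimum \eqref{eq:run_min} satisfies $M_t = e^{\underline{W}_t}$ with $\underline{W}_t := \inf_{0 \le u < t} W_u$. Path continuity makes the infimum over $[0,t)$ coincide almost surely with the one over $[0,t]$, so no boundary effect at $u=t$ has to be tracked. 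This turns part~2 into the computation of the joint density of $(\underline{W}_t, W_t)$ and part~1 into the computation of the density of $\underline{W}_t$.

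I would prove part~2 first. For a driftless Brownian motion of unit volatility the reflection principle gives the joint density of (minimum, endpoint) at a point $(b,a)$ with $b \le \min(a,0)$ as $\tfrac{2(a-2b)}{\sqrt{2\pi}\,t^{3/2}}\exp\!\big(-(a-2b)^2/(2t)\big)$, the factor $a-2b$ arising from reflecting each path at its first passage to the level $b$. To reinstate the drift $m$ and volatility $\sigma$ I would write $W = \sigma\big(B + (m/\sigma)\,\mathrm{id}\big)$ and remove the drift by a Girsanov change of measure; combined with the linear rescaling by $\sigma$ (Jacobian $\sigma^{2}$, and $a-2b \mapsto (x-2y)/\sigma$), this multiplies the driftless density by the tilt $\exp\!\big(m x/\sigma^{2} - m^{2}t/(2\sigma^{2})\big)$ expressed through the endpoint $x=W_t$, and produces the $\sigma^{-3}$ prefactor. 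Finally I would map $(\underline{W}_t, W_t) \mapsto (M_t, X_t)=(e^{\underline{W}_t}, e^{W_t})$ by the substitution $u=e^{y}$, $v=e^{x}$, whose Jacobian supplies the factor $1/(uv)$; using $x-2y=\ln(v/u^{2})$ and $e^{mx/\sigma^2}=v^{m/\sigma^2}$, and reading off the support $y\le 0,\ y\le x$ as $u\le 1,\ u\le v$, yields exactly \eqref{eq:joint_density_min_x}.

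For part~1 I would either marginalise \eqref{eq:joint_density_min_x} by integrating over $v\in[u,\infty)$ — after completing the square in $\ln(v/u^{2})$ the integral splits into a Gaussian part producing a $\varphi$-contribution and a residual part producing the $\varPhi$-contribution — or, equivalently and more directly for the three-term shape of \eqref{eq:density_min}, differentiate with respect to $w$ the classical distribution function of the minimum of the drifted Brownian motion $W$ evaluated at the level $\ln w$, where the chain rule supplies the factor $1/w$ and the derivative of the $e^{\cdot}\,\varPhi(\cdot)$ summand produces, by the product rule, both a $\varphi$- and a $\varPhi$-term. Either route delivers \eqref{eq:density_min} on $w\in(0,1]$, and it is automatically the marginal of \eqref{eq:joint_density_min_x}.

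I expect the main obstacle to be the drifted reflection step in part~2: one must combine the reflection-principle density with the Girsanov tilt and the two changes of variables (the $\sigma$-rescaling and the exponential map) while keeping every Jacobian and exponential prefactor in its place, and must check that the support constraints transform correctly. Once the joint density is secured, part~1 is a routine Gaussian integration (or a one-line differentiation of a known distribution function), and the consistency of the two formulas is built in.
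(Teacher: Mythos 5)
Your approach is correct, but it is genuinely different from the paper's: the paper ``proves'' Lemma \ref{distr_runmin} by a one-line citation to Harrison \cite{harrison}, whereas you actually derive the formulas. Your derivation is the standard argument that underlies such results: write $M_t = e^{\underline{W}_t}$ with $W_u = mu+\sigma B_u$, obtain the joint law of $(\underline{W}_t, W_t)$ from the driftless reflection-principle density multiplied by the Girsanov tilt $\exp\big(mx/\sigma^2 - m^2t/(2\sigma^2)\big)$ (legitimate precisely because the Radon--Nikodym density depends on the path only through the endpoint $x$), then push through the two changes of variables. Your bookkeeping is right: the $\sigma$-rescaling contributes $\sigma^{-3}$ via the Jacobian $\sigma^{-2}$ and $a-2b = (x-2y)/\sigma$, the exponential map contributes $1/(uv)$, and with $x-2y=\ln(v/u^2)$, $e^{mx/\sigma^2}=v^{m/\sigma^2}$ and the support $y\le 0$, $y\le x$ mapping to $u\le 1$, $u\le v$, you land exactly on \eqref{eq:joint_density_min_x}. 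What you gain over the paper is a self-contained proof; what the paper gains is brevity.

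One warning about part 1, so that you are not thrown off when you execute it: neither of your two routes will reproduce \eqref{eq:density_min} literally as typeset, because that formula contains a scaling error (it is the unit-volatility formula in which the level was rescaled to $\ln(w)/\sigma$ but the drift was left as $m$ instead of being rescaled to $m/\sigma$). What your routes actually give --- differentiating $\Psi$ of Lemma \ref{compl_distr} in $w$, or integrating \eqref{eq:joint_density_min_x} over $v\in[u,\infty)$; the two agree --- is
\begin{align*}
f^M_t(w) = \frac{1}{\sigma w\sqrt t}\,\varphi\Big(\frac{mt-\ln w}{\sigma\sqrt t}\Big) + \frac{e^{2m\ln(w)/\sigma^2}}{\sigma w\sqrt t}\,\varphi\Big(\frac{mt+\ln w}{\sigma\sqrt t}\Big) + \frac{2m}{w\sigma^2}\,e^{2m\ln(w)/\sigma^2}\,\varPhi\Big(\frac{mt+\ln w}{\sigma\sqrt t}\Big),
\end{align*}
which coincides with \eqref{eq:density_min} only when $\sigma=1$: the printed version pairs $mt$ (not $mt/\sigma$) with $\ln(w)/\sigma$, and has exponent $2m\ln(w)/\sigma$ and prefactor $2m/(w\sigma)$ in place of $2m\ln(w)/\sigma^2$ and $2m/(w\sigma^2)$. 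In particular the lemma as printed is internally inconsistent, since its part 1 is then not the $v$-marginal of its part 2; your derivation is the right one and exposes a typo in the statement rather than failing. A final remark on your marginalization route: after substituting $z=\ln(v/u^2)$ and completing the square you will obtain only two terms, one Gaussian and one $\varPhi$-term. This is consistent with the three-term form above because its two $\varphi$-terms are equal to each other (use $(mt+\ln w)^2 = (mt-\ln w)^2 + 4mt\ln w$), so the Gaussian term simply carries a factor $2$.
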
	
\begin{proof}
	The above formulas are a corollary to the results given in \textsc{Harrison} \cite[Ch. 1]{harrison}.
\end{proof}
\begin{lemma}\label{compl_distr}
	Under the assumptions of Lemma \ref{distr_runmin} the complementary distribution function $\Psi$ of $M$ is given by  	
	\begin{align}\label{eq:distribution_min_x}
	\Psi(t,u)&=\mbb P(M_t > u)=\varPhi\left(\frac{-\ln(u)+mt}{\sigma\sqrt{t}} \right)-\exp\left\{\frac{2m}{\sigma^2}\ln(u) \right\}\varPhi\left(\frac{\ln(u)+mt}{\sigma\sqrt{t}} \right)
	\end{align}
	for $t>0$, $u\leq 1$. Further, it holds $\Psi(t,u)=0$ for $t \geq 0$, $u> 1$ and $\Psi(0,u)=1$ for $u\leq 1$.
\end{lemma}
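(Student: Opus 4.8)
The plan is to reduce the statement about the geometric Brownian motion to the classical first-passage result for a Brownian motion with drift and then to invoke the reflection principle. Writing $X_s=e^{ms+\sigma B_s}$ and taking logarithms, for $u\le 1$ (so that $\ln u\le 0$) the monotonicity of $x\mapsto e^{\sigma x}$ gives the equivalence
\begin{align*}
\{M_t>u\}=\Big\{\inf_{0\le s<t}(ms+\sigma B_s)>\ln u\Big\}.
\end{align*}
Since $X$ has continuous paths, the infimum over the half-open interval $[0,t)$ coincides almost surely with that over $[0,t]$, so I may work with the closed interval throughout. This turns the problem into computing $\mbb P\big(\inf_{0\le s\le t}W_s>a\big)$ for the drifted Brownian motion $W_s=ms+\sigma B_s$ and the level $a=\ln u\le 0$.

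First I would dispose of the boundary cases. For $u>1$ the level $a=\ln u$ is positive while $W_0=0$, so the infimum is automatically $\le 0<a$ and $\Psi(t,u)=0$; equivalently $M_t\le X_0=1<u$ by continuity of $X$. The value at $t=0$ is a matter of convention (the infimum over the empty time set), and the stated normalization $\Psi(0,u)=1$ for $u\le 1$ is consistent with the $t\downarrow 0$ limit of the formula for $u<1$.

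For the main case $t>0$, $u\le 1$, I would establish the law of the minimum of $W$ by combining Girsanov's theorem with the reflection principle. Under the measure $\widetilde{\mbb P}$ with $\frac{\d\widetilde{\mbb P}}{\d\mbb P}=\exp\{-\tfrac{m}{\sigma}B_t-\tfrac12\tfrac{m^2}{\sigma^2}t\}$, the shifted process $\widetilde B_s=B_s+\tfrac{m}{\sigma}s$ is a standard Brownian motion and $W_s=\sigma\widetilde B_s$ is driftless. Writing $\mbb P\big(\inf_{s\le t}W_s>a\big)$ as the $\widetilde{\mbb P}$-expectation of the reciprocal density $\exp\{\tfrac{m}{\sigma}\widetilde B_t-\tfrac12\tfrac{m^2}{\sigma^2}t\}$ on the event $\{\inf_{s\le t}\widetilde B_s>a/\sigma\}$, and inserting the joint law of $(\widetilde B_t,\inf_{s\le t}\widetilde B_s)$ supplied by the reflection principle, the computation reduces to two Gaussian integrals. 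Completing the square in each produces the two terms of \eqref{eq:distribution_min_x}: the un-reflected part gives $\varPhi\big(\tfrac{-\ln u+mt}{\sigma\sqrt t}\big)$, while the reflected part, whose density carries an exponential weight, yields the factor $\exp\{\tfrac{2m}{\sigma^2}\ln u\}$ multiplying $\varPhi\big(\tfrac{\ln u+mt}{\sigma\sqrt t}\big)$.

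Alternatively, since Lemma \ref{distr_runmin} already provides the density $f^M_t$, one may simply integrate, $\Psi(t,u)=\int_u^1 f^M_t(w)\,\d w$ (there being no atom of $M_t$ at $1$), and recognise the two $\varPhi$-terms as an antiderivative by checking $-\partial_u\Psi(t,u)=f^M_t(u)$ together with the boundary value $\Psi(t,1)=0$, which follows from $\ln 1=0$. The main obstacle is the same in both routes: carefully tracking the exponential weight that the reflection, respectively the change of measure, introduces, and completing the square so that the two normal integrals collapse to the stated combination of $\varPhi$-values. The reduction and the boundary cases are routine; the drift bookkeeping is where care is required.
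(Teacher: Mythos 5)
Your proposal is correct, but it is genuinely different from what the paper does, for the simple reason that the paper does not prove this lemma at all: its entire proof is the citation to \textsc{Jeanblanc \& Valchev} \cite[Lemma 2]{jeanblanc-valchev}. You instead give a self-contained derivation, and it goes through. The reduction $\{M_t>u\}=\{\inf_{0\le s\le t}(ms+\sigma B_s)>\ln u\}$ is valid (by path continuity the infimum over $[0,t)$ equals that over $[0,t]$ surely, not merely a.s.), the boundary cases are handled correctly ($M_t\le X_0=1$ disposes of $u>1$ when $t>0$, and $t=0$ is indeed only a convention about the infimum over an empty time set), and the main computation checks out: with $\theta=m/\sigma$ and $b=\ln(u)/\sigma\le 0$, integrating the Girsanov weight $e^{\theta\widetilde B_t-\theta^2t/2}$ against the reflection-principle joint density $\phi_t(x)-\phi_t(2b-x)$ on $(b,\infty)$, where $\phi_t$ denotes the $N(0,t)$ density, yields $\varPhi\bigl((-b+\theta t)/\sqrt t\,\bigr)$ from the unreflected term, while the substitution $y=2b-x$ in the reflected term produces exactly the weight $e^{2\theta b}=\exp\{2m\ln(u)/\sigma^2\}$ multiplying $\varPhi\bigl((b+\theta t)/\sqrt t\,\bigr)$ — which is \eqref{eq:distribution_min_x}. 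Your alternative route, integrating the density of Lemma \ref{distr_runmin} over $(u,1]$ and checking that the $\varPhi$-combination is an antiderivative vanishing at $u=1$, is also sound; the no-atom claim at $1$ holds because the drifted Brownian motion a.s.\ takes strictly negative values immediately. As for what each approach buys: the paper's citation is economical and defensible for a classical fact, whereas your derivation makes the lemma self-contained, makes visible where the exponential correction factor comes from, and (in the antiderivative variant) has the side benefit of verifying consistency between Lemmas \ref{distr_runmin} and \ref{compl_distr}.
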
	  
\begin{proof}
	The proof is given in \textsc{Jeanblanc \& Valchev} \cite[Lemma 2]{jeanblanc-valchev}.
\end{proof}
\subsection{C-investor's information case}\label{sec_cr:s_inv}
The next theorem shows that the conditional survival probability given the information of a C-investor can be formulated in terms of $\mbb F$-conditional survival probabilities for which we derive explicit formulas. 
\begin{theorem}\label{theo_C_investor}
	Let $X=(X_t)_{t\ge 0}$ be the asset value process given in \eqref{eq:firm_value} with $X_0=1$. Further, let $M_t$ be the associated running minimum started at zero defined in \eqref{eq:run_min} and $M_{[s,t)}$ be the associated running minimum started at a certain time point $s$ defined in \eqref{eq:run_min2}. Then, for the conditional survival probability given the information of a C-investor it holds
	\begin{align}\label{eq:enl_fil}
	\mbb P(\tau > T|\cal G^C_t) = \1{\tau >t }\frac{\mbb P(\tau > T|\cal F_t)}{\mbb P(\tau > t|\cal F_t)}, \qquad \text{for } t < T,
	\end{align}
	where the $\mbb F$-conditional survival probabilities are given by the following formulas for $n\geq 2$:
	\begin{enumerate}
		\item For $t\in [t_{i-1},t_i)$, $i=1,\ldots,n-1$, we have
		\begin{align}
		\begin{split}\label{eq:sp_s_a}
		\mbb P(\tau > T|\cal F_t)&= \iint\limits_{[0,1]\times [u,\infty)}\ldots \iint\limits_{[0,1]\times [u,\infty)}  f^{\widehat M, \widehat X}_{t_i-t}(u_i,v_i)\prod\limits_{j=i+1}^{n-1}f^{\widehat M, \widehat X}_{t_j-t_{j-1}}(u_j,v_j)\\
		&\quad \int_0^1F_{L^1,\ldots,L^n}( M_{t_1},M_{[t_1,t_2)},\ldots,M_{[t_{i-2},t_{i-1})},\min(M_{[t_{i-1},t)},u_iX_t),\\
		&\qquad\quad  u_{i+1}v_iX_t,\ldots,u_{n-1}v_{n-2}\ldots v_iX_t, wv_{n-1}\ldots v_iX_t)f^{\widehat M}_{T-t_{n-1}}(w)\\
		&\qquad \quad \d w\d v_{n-1} \d u_{n-1}\ldots \d v_i \d u_i,
		\end{split}
		\end{align}
		where $f_t^{\widehat M}$ and $f_t^{\widehat M, \widehat X}$ are given in \eqref{eq:density_min} and \eqref{eq:joint_density_min_x}, respectively.\\
		For $t\in [t_{n-1},T)$ it holds
		\begin{align}\label{eq:sp_s_b}
		\mbb P(\tau > T|\cal F_t)&= \int_0^1 F_{L^1,\ldots,L^n}(M_{t_1},M_{[t_1,t_2)},\ldots,M_{[t_{n-2},t_{n-1})},\min(M_{[t_{n-1},t)},wX_t)) f^{\widehat M}_{T-t}(w)\d w.
		\end{align} 
		\item For $t\in [t_{i-1},t_i)$, $i=1,\ldots,n$, it holds
		\begin{align}\label{eq:sp_s_c}
		\mbb P(\tau >t|\cal F_t)=F_{L^1,\ldots,L^i}(M_{t_1},M_{[t_1,t_2)},\ldots,M_{[t_{i-2},t_{i-1})},M_{[t_{i-1},t)}).
		\end{align}		
	\end{enumerate}
\end{theorem}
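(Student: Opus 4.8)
The plan is to prove the two assertions separately. The identity \eqref{eq:enl_fil} is a general fact about progressive enlargement and requires no computation, whereas the explicit formulas \eqref{eq:sp_s_a}--\eqref{eq:sp_s_c} carry the actual work. For \eqref{eq:enl_fil}, I would invoke the classical key lemma for the progressive enlargement of $\mbb F$ by the random time $\tau$: writing $G_t=\mbb P(\tau>t\mid\cal F_t)$ for the survival (Azéma) supermartingale, one has, for every bounded $\cal F_T$-measurable $Y$ and $t<T$, $\mbb E[Y\1{\tau>T}\mid\cal G^C_t]=\1{\tau>t}\,G_t^{-1}\,\mbb E[Y\1{\tau>T}\mid\cal F_t]$; the choice $Y\equiv1$ gives \eqref{eq:enl_fil} and reduces everything to the two $\mbb F$-conditional survival probabilities.

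The structural step common to both is to read off the default event from the piecewise-constant barrier. Since the barrier equals $L^j$ on $[t_{j-1},t_j)$, default happens inside that block exactly when the block running minimum of $X$ falls to the level $L^j$, so $\{\tau>T\}=\bigcap_{j=1}^n\{M_{[t_{j-1},t_j)}>L^j\}$ and, for $t\in[t_{i-1},t_i)$, $\{\tau>t\}=\bigcap_{j=1}^{i-1}\{M_{[t_{j-1},t_j)}>L^j\}\cap\{M_{[t_{i-1},t)}>L^i\}$. Each block minimum is $\cal F_T$-measurable while $\mbf L$ is, by assumption, independent of $\cal F_T$; hence conditioning first on the path and integrating out $\mbf L$ against its joint law replaces the indicator of the survival event by the joint distribution function $F$ evaluated at the block minima (the densities being continuous, the strict and non-strict inequalities agree). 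This already settles part (ii): for $t\in[t_{i-1},t_i)$ every minimum entering $\{\tau>t\}$ is $\cal F_t$-measurable, so conditioning on $\cal F_t$ and using independence of $\mbf L$ yields \eqref{eq:sp_s_c} directly.

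For part (i) the minima over the blocks lying (partly) after $t$ must be handled through the self-similar multiplicative structure of the geometric Brownian motion. The idea is that for $s\ge0$ the factor $X_{t+s}/X_t=e^{ms+\sigma(B_{t+s}-B_t)}$ is independent of $\cal F_t$ and is a fresh copy of $X$ started at $1$, and the increments over the disjoint future blocks $[t,t_i),[t_i,t_{i+1}),\dots,[t_{n-1},T)$ are mutually independent. Denoting by $(u_j,v_j)$ the (minimum, terminal value) of the normalized process on block $j$ and by $w$ the normalized minimum on the final block, one obtains $X_{t_j}=X_t\,v_i\cdots v_j$, and hence the current block minimum $M_{[t_{i-1},t_i)}=\min(M_{[t_{i-1},t)},u_iX_t)$, the intermediate ones $M_{[t_{j-1},t_j)}=u_j v_{j-1}\cdots v_i X_t$, and the final one $M_{[t_{n-1},T)}=w\,v_{n-1}\cdots v_i X_t$. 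Substituting these into $F_{L^1,\dots,L^n}$ and integrating against the factorized joint density $f^{\widehat M,\widehat X}_{t_i-t}(u_i,v_i)\prod_{j=i+1}^{n-1}f^{\widehat M,\widehat X}_{t_j-t_{j-1}}(u_j,v_j)\,f^{\widehat M}_{T-t_{n-1}}(w)$, with densities from Lemma \ref{distr_runmin} supported on $u_j\in[0,1]$, $v_j\ge u_j$, yields \eqref{eq:sp_s_a}. The boundary case $t\in[t_{n-1},T)$ has no intermediate blocks, only $w=\widehat M_{T-t}$ remains random, and the same argument collapses to \eqref{eq:sp_s_b}.

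The main obstacle is the bookkeeping in part (i): correctly decomposing each future block minimum as the product of $X_t$, the intermediate terminal factors $v_j$, and a fresh normalized minimum, verifying that these functionals depend on disjoint Brownian increments so that the joint density genuinely factorizes into the product above, and pinning down the integration ranges so that Fubini may be applied to pull the conditional expectation inside the integral. Everything else rests only on the independence of $\mbf L$ from the path together with the running-minimum laws of Lemma \ref{distr_runmin}.
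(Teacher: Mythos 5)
Your proposal is correct and follows essentially the same route as the paper: Eq.~\eqref{eq:enl_fil} via the classical key lemma for progressive enlargement, then decomposition of $\{\tau>T\}$ into block-minimum events, the multiplicative restart property of the geometric Brownian motion to factor the future block minima as $u_j v_{j-1}\cdots v_i X_t$, and independence of $\mathbf{L}$ from $\cal F_T$ to replace indicators by $F_{L^1,\ldots,L^n}$. The only difference is presentational: the paper carries out the computation explicitly for $n=2$ and asserts the general case is analogous, while you sketch the bookkeeping for general $n$ directly.
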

\begin{proof}
	We obtain Eq. \eqref{eq:enl_fil} by using classical results of progressive enlargement (see \textsc{Jeanblanc, Yor \& Chesney} \cite[Sec. 7.3.3]{jeanblanc-yor}).
	For the sake of simpler notation the proof of the $\mbb F$-conditional survival probabilities is only given for $n=2$, i.e., the threshold is $L^1$ in the interval $[t_0,t_1)$ and $L^2$ in the interval $[t_1,T)$. The proof for $n>2$ is along the same line and skipped.\\
	Eq. \eqref{eq:firm_value_2} yields that for $s>t$ the firm value $X_s$ can be expressed by
	\begin{align*} 
	X_s &= X_t \exp\{m(s-t)+\sigma (B_s -B_t) \}= X_t \exp\{m(s-t)+\sigma (B_{s-t+t} -B_t) \}\\
	&= X_t \exp\{m(s-t)+\sigma \widehat B_{s-t} \} = X_t Y_{s-t},
	\end{align*}	
	where $(\widehat B_u)_{u\geq 0}$ given by $\widehat B_u = B_{u+t}-B_t$ is a Brownian motion starting at zero and independent of $\cal F_t$. The process $(Y_u)_{u\geq 0}$ given by $Y_u = \exp\{mu+\sigma \widehat B_{u} \}$ is independent of $\cal F_t$ and it holds
	\begin{align*}
	Y_u = \frac{X_{t+u}}{X_t} \overset{d}{=}X_u.
	\end{align*}
\myparagraph{Proof of \eqref{eq:sp_s_a}:}  Let $t\in [t_0,t_1)$ be fixed, then we can describe the event that no default occurs until the maturity time $T$ by
		\begin{align*}
		\{\tau > T\} &= \{L^1 < M_{t_1}\} \cap  \{L^2 < M_{[t_1,T)} \} = \{L^1 < \inf_{s< t_1}X_s\} \cap  \{L^2 < \inf_{t_1\leq s <T} X_s\}\\
		&= \{L^1 < \inf_{s< t}X_s\} \cap \{L^1 < \inf_{t\leq s< t_1}X_s\} \cap  \{L^2 < \inf_{t_1\leq s <T} X_s\}\\
		&=\{L^1 < \inf_{s< t}X_s\} \cap \{L^1 < \inf_{t\leq s< t_1}X_tY_{s-t}\} \cap  \{L^2 < \inf_{t_1\leq s <T} X_{t_1}Z_{s-t_1}\}, 
		\end{align*}  
		where $(Z_u)_{u\geq 0}$ given by $Z_u = \exp\{mu+\sigma \widetilde B_{u} \}$ with the Brownian motion $(\widetilde B_u)_{u\geq 0}$ given by $\widetilde B_u = B_{u+t_1}-B_{t_1}$ is independent of $\cal F_{t_1}$ and it holds $Z_u \overset{d}{=}X_u$. We denote by $(\widehat M_u)_{u\geq 0}$ and $(\widetilde M_u)_{u\geq 0}$ the running minimum of $Y$ and $Z$, respectively, i.e.,
		\be 
		\widehat M_u = \inf_{s < u} Y_s \qquad \text{and}\qquad \widetilde M_u = \inf_{s < u} Z_s.
		\ee 
		Then it holds
		\begin{align*}
		\{\tau > T\} &= \{L^1 < M_{t}\} \cap \{ L^1 < \widehat M_{t_1-t}X_t\} \cap \{ L^2 < \widetilde M_{T-t_1}X_{t_1} \}\\
		&=\{L^1 < M_{t}\} \cap \{ L^1 < \widehat M_{t_1-t}X_t\} \cap \{ L^2 < \widetilde M_{T-t_1}Y_{t_1-t}X_t \}.
		\end{align*}
		Based on this representation the conditional survival probability until the maturity time $T$ given the information $\cal F_t$ can be written as
		\begin{align*}
		\mbb P(\tau >T|\cal F_t)&=\mbb P(L^1 < M_{t}, L^1 < \widehat M_{t_1-t}X_t, L^2 < \widetilde M_{T-t_1}Y_{t_1-t}X_t|\cal F_t)\\
		&=\int_0^1 \int_{u}^{\infty} \mbb P(L^1 < M_{t}, L^1 < uX_t, L^2 < \widetilde M_{T-t_1}vX_t|\cal F_t) f^{\widehat M,Y}_{t_1-t}(u,v)\d v \d u\\
		&=\int_0^1\int_u^{\infty} \int_{0}^{1} \mbb P(L^1 < \min{(M_{t},uX_t)}, L^2 < wvX_t|\cal F_t)f^{\widetilde M}_{T-t_1}(w) f^{\widehat M,Y}_{t_1-t}(u,v)\d w \d v \d u\\
		&=\int_0^1\int_u^{\infty} \int_{0}^{1}F_{L^1,L^2}(\min{(M_{t},uX_t)}, wvX_t)f^{\widetilde M}_{T-t_1}(w)f^{\widehat M,Y}_{t_1-t}(u,v)\d w \d v \d u.
		\end{align*}		
		We have exploited in the last equation the independence of $(L^1,L^2)$ from $\cal F_T$.
		\\
		\myparagraph{Proof of \eqref{eq:sp_s_b}:} 
		Let $t\in [t_1,T)$ be fixed, then we can describe the event that no default occurs until the maturity time $T$ by
		\begin{align*}
		\{\tau > T\} &=\{L^1 < M_{t_1}\} \cap  \{L^2 < M_{[t_1,T)} \}= \{L^1 < M_{t_1}\} \cap \{L^2 < \inf_{t_1\leq s< t}X_s\} \cap  \{L^2 < \inf_{t\leq s <T} X_s\}\\
		&=\{L^1 < M_{t_1}\} \cap \{L^2 < \inf_{t_1\leq s< t}X_s\} \cap  \{L^2 < \inf_{t\leq s <T} X_{t}Y_{s-t}\}\\
		&=\{L^1 < M_{t_1}\} \cap \{ L^2 < M_{[t_1,t)}\} \cap \{ L^2 < \widehat M_{T-t}X_{t}\}.
		\end{align*}
		Based on this representation the conditional survival probability until the maturity time $T$ given the information $\cal F_t$ can be calculated by
		\begin{align*}
		\mbb P(\tau >T|\cal F_t)
		&=\mbb P(L^1 < M_{t_1}, L^2 < M_{[t_1,t)}, L^2 < \widehat M_{T-t}X_{t}|\cal F_t)\\
		&=\int_0^1 \mbb P(L^1 < M_{t_1}, L^2 < M_{[t_1,t)}, L^2 <w X_{t} |\cal F_t)f^{\widehat M}_{T-t}(w)\d w\\
		&=\int_0^1 F_{L^1,L^2}(M_{t_1}, \min(M_{[t_1,t)},w X_{t}))f^{\widehat M}_{T-t}(w)\d w.
		\end{align*}
		\myparagraph{Proof of \eqref{eq:sp_s_c}:} 
		The conditional survival probability until time $t\in [t_0,t_1)$ given the information $\cal F_t$ is obtained by		
		\begin{align*}
		\mbb P(\tau >t | \cal F_t) = \mbb P(L^1 < M_t | \cal F_t) = F_{L^1}(M_t).
		\end{align*}
		Finally, for $t\in [t_1,T)$ it holds
		\begin{align*}
		\mbb P(\tau >t | \cal F_t) = \mbb P(L^1 < M_{t_1}, L^2 < M_{[t_1,t)} | \cal F_t) = F_{L^1,L^2}(M_{t_1},M_{[t_1,t)} ).
		\end{align*}
\end{proof}
\begin{remark}
	For $n=1$ we obtain as a special case the model proposed by \textsc{Giesecke \& Goldberg} \cite{giesecke-goldberg}, where the default barrier is constant but random, i.e., $L_t=L^1$. The conditional survival probability is given by 
	\begin{align}\label{eq:cond_prob_s_giesecke}
	\mbb P(\tau > T|\cal G^C_t)
	= \1{\tau >t }\frac{\int_0^1  F_{L^1}(\min(M_t,w X_t))f^{\widehat M}_{T-t}(w)\d w}{F_{L^1}(M_t )}, \quad t\in [0,T).
	\end{align}
\end{remark}
\begin{remark}\label{rem:cond_prob_s}
	Let $n=2$ and assume that $L^1$ and $L^2$ are independent with probability distribution functions $F_{L^1}$ and $F_{L^2}$, respectively. Then the conditional survival probability for $t\in [t_0,t_1)$ is given by 
	\begin{align*}
	\mbb P(\tau >T|\cal G^C_t) &= \frac{1}{F_{L^1}(M_t)}\int_0^1\int_u^{\infty} \int_{0}^{1}F_{L^1}(\min{(M_{t},uX_t)})F_{L^2}(wvX_t)f^{\widetilde M}_{T-t_1}(w)f^{\widehat M,Y}_{t_1-t}(u,v)\d w \d v \d u
	\end{align*}
	on the no default set $\{\tau >t\}$. For $t\in [t_1,T)$ calculating the conditional survival probability reduces to the case of a constant but random barrier. We have
	\begin{align*}
	\mbb P(\tau >T|\cal F_t) 
	&=\int_0^1 F_{L^1,L^2}(M_{t_1}, \min(M_{[t_1,t)},w X_{t}))f^{\widehat M}_{T-t}(w)\d w\\
	&=F_{L^1}(M_{t_1})\int_0^1  F_{L^2}(\min(M_{[t_1,t)},w X_{t}))f^{\widehat M}_{T-t}(w)\d w
	\end{align*}
	and
	\begin{align*}
	\mbb P(\tau >t | \cal F_t)  = F_{L^1,L^2}(M_{t_1},M_{[t_1,t)} ) = F_{L^1}(M_{t_1})F_{L^2}(M_{[t_1,t)} )
	\end{align*}	
	yielding the following simplified formula for the conditional survival probability
	\begin{align*}
	\mbb P(\tau > T|\cal G^C_t) 
	 = \1{\tau >t }\frac{\int_0^1  F_{L^2}(\min(M_{[t_1,t)},w X_{t}))f^{\widehat M}_{T-t}(w)\d w}{F_{L^2}(M_{[t_1,t)} )}.
	\end{align*}	
\end{remark}
A direct application of the conditional survival probability is the pricing of credit derivatives such as defaultable bonds. For example, let us consider a zero-coupon bond that matures at $T$ and has zero recovery, i.e., the defaultable bond pays 1 at $T$ if there was no default by $T$ and zero otherwise. Assuming that the pricing probability is $\mbb P$, then the price $C_t$ of such a financial product is given by
\begin{align*}
C_t &= e^{-r(T-t)}\mbb E[\1{\tau >T}\,|\,\cal G^C_t]=e^{-r(T-t)}	\mbb P(\tau > T|\cal G^C_t) =e^{-r(T-t)} \1{\tau >t }\frac{\mbb P(\tau > T|\cal F_t)}{\mbb P(\tau > t|\cal F_t)},
\end{align*}
where $r\geq 0$ is a discount factor. An important quantity in the credit risk analysis is the credit yield spread $S_t$ on a zero-coupon bond issued by a firm. It is the difference between the yield at time $t$ on a credit risky and a credit risk-free zero-coupon bond, both maturing at $T$. Thus, the credit spread $S_t$ is given by
\begin{align*}
S_t=-\frac{1}{T-t}\ln\{\mbb P(\tau > T|\cal G^C_t)\}.
\end{align*}
\subsection{D-investor's information case}\label{sec_cr:d_inv}
n this subsection we consider the case where the asset process $X$ is not completely observable by ordinary investors on the market. More precisely, the D-investor obtains information about the asset value only at discrete times which include the adjustment times $t_k$, $k\in \{0,\ldots,n-1\}$, of the default barrier. We denote the information dates between two adjustment times $t_k$ and $t_{k+1}$ by $T_i^k$, $k=0,\ldots,n-1$, $i=0,\ldots,\J_k-1$, where $T_0^k :=t_k$, $T_{i-1}^k < T_i^k$, $T_{\J_k-1}^k <t_{k+1}$, $T_{\J_k}^k := t_{k+1}$ and $\sum_{k=0}^{n-1}\J_k =\J$. Further, we introduce 
\begin{align}\label{eq:cp_d_k}
K_j^k(\ell)=\mbb P(M_{[T^k_{j-1},T^k_j)}>\ell| X_{T^k_{j-1}},X_{T^k_j})\quad \text{and} \quad K^{k,i}(\ell)= \prod_{j=1}^i K_j^k(\ell)
\end{align}
for $k=0,\ldots,n-1$, $j=1,\ldots,\J_k$.
\begin{lemma}\label{lem:cond_prob_min}
	The conditional probabilities $K_j^k(\ell)$, $k=0,\ldots,n-1$, $j=1,\ldots,\J_k$, defined in \eqref{eq:cp_d_k} are given by
	\begin{align*}
	K^k_j(\ell) =  
	1-\exp\bigg\{\frac{-2}{\sigma^2 (T^k_j-T^k_{j-1})}\ln\bigg(\frac{\ell}{X_{T^k_{j-1}}}\bigg)\ln\bigg(\frac{\ell}{X_{T^k_j}}\bigg)\bigg\},
	\end{align*}
	for $\ell< \min(X_{T^k_{j-1}},X_{T^k_j})$ and $K^k_j(\ell) = 0$ otherwise.
\end{lemma}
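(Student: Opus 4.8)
The plan is to reduce the conditional probability to a classical reflection statement for a Brownian bridge. Write $s=T^k_{j-1}$, $t=T^k_j$ and $\tau=t-s$. Exactly as in the proof of Theorem~\ref{theo_C_investor}, the stationary independent increments of $B$ let me write $X_u=X_s\exp\{m(u-s)+\sigma\widehat B_{u-s}\}$ for $u\in[s,t]$, with $(\widehat B_v)_{v\ge 0}$ a Brownian motion started at $0$ and independent of $\cal F_s$. Passing to logarithms, $\eta_u:=\ln X_u=\ln X_s+m(u-s)+\sigma\widehat B_{u-s}$ is a Brownian motion with drift $m$ and volatility $\sigma$ on $[s,t]$, and by a.s. continuity of $X$ the event $\{M_{[s,t)}>\ell\}$ agrees up to a null set with $\{\min_{s\le u\le t}\eta_u>\ln\ell\}$. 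Conditioning on $X_{T^k_{j-1}}$ and $X_{T^k_j}$ is the same as conditioning on the endpoints $\eta_s=\ln X_s$ and $\eta_t=\ln X_t$.

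The structural fact I would exploit next is that, conditionally on its endpoints, a Brownian motion with drift is a Brownian bridge whose law does not depend on the drift $m$. Concretely, in the product of transition densities the $m$-dependent contribution to the exponent equals $\tfrac{m}{\sigma^2}(\eta_t-\eta_s)-\tfrac{m^2\tau}{2\sigma^2}$, which depends only on the endpoints and on $\tau$ and hence cancels upon conditioning. Therefore, given $\eta_s=\alpha:=\ln X_s$ and $\eta_t=\beta:=\ln X_t$, the path $(\eta_u)_{s\le u\le t}$ is distributed as a Brownian bridge from $\alpha$ to $\beta$ over $[s,t]$ with variance rate $\sigma^2$, and I may as well take $m=0$.

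It then remains to invoke the classical reflection formula for the minimum of a Brownian bridge: for a bridge from $\alpha$ to $\beta$ of length $\tau$ with variance rate $\sigma^2$ and any level $c<\min(\alpha,\beta)$,
\[
\mbb P\Big(\min_{s\le u\le t}\eta_u\le c \,\Big|\, \eta_s=\alpha,\eta_t=\beta\Big)=\exp\Big\{-\tfrac{2}{\sigma^2\tau}(\alpha-c)(\beta-c)\Big\},
\]
obtained by reflecting the bridge at the level $c$, so that the left-hand side is exactly the ratio of the reflected to the unreflected endpoint density. Substituting $c=\ln\ell$, $\alpha=\ln X_{T^k_{j-1}}$, $\beta=\ln X_{T^k_j}$ and using $(\alpha-c)(\beta-c)=\ln(\ell/X_{T^k_{j-1}})\ln(\ell/X_{T^k_j})$ gives, for $\ell<\min(X_{T^k_{j-1}},X_{T^k_j})$,
\[
K^k_j(\ell)=1-\exp\Big\{\tfrac{-2}{\sigma^2(T^k_j-T^k_{j-1})}\ln\big(\tfrac{\ell}{X_{T^k_{j-1}}}\big)\ln\big(\tfrac{\ell}{X_{T^k_j}}\big)\Big\},
\]
as claimed. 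If $\ell\ge\min(X_{T^k_{j-1}},X_{T^k_j})$, then by continuity $M_{[s,t)}\le\min(X_s,X_t)\le\ell$ a.s., so $K^k_j(\ell)=0$, completing the statement.

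The main obstacle I anticipate is the rigorous justification that conditioning on the endpoints removes the drift (so that the drifted-bridge minimum coincides with the driftless one) together with the careful bookkeeping of the reflection argument on the relevant range $c<\min(\alpha,\beta)$. A fully self-contained alternative that avoids external bridge results is to integrate the explicit joint density from Lemma~\ref{distr_runmin}\,\eqref{eq:joint_density_min_x}: by the same scaling reduction, writing $b:=X_t/X_s$,
\[
K^k_j(\ell)=\frac{\int_{\ell/X_s}^{\min(1,b)} f^{M,X}_{\tau}(r,b)\,\d r}{\int_{0}^{\min(1,b)} f^{M,X}_{\tau}(r,b)\,\d r},
\]
and the substitution $\xi=\ln\!\big(b/r^2\big)$ turns both integrals into elementary Gaussian ones (the integrand in $r$ is proportional to $\tfrac{1}{r}\ln(b/r^2)\exp\{-\ln^2(b/r^2)/(2\sigma^2\tau)\}$, whose antiderivative is $\propto \exp\{-\xi^2/(2\sigma^2\tau)\}$), reproducing the same closed form; this route is computationally heavier but needs no bridge machinery.
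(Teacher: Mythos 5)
Your proof is correct, but it necessarily differs from the paper's, because the paper gives no argument at all for Lemma \ref{lem:cond_prob_min}: the stated proof is a citation of Lemma 2 in Jeanblanc \& Valchev \cite{jeanblanc-valchev}, exactly as for Lemma \ref{compl_distr}. Writing $s=T^k_{j-1}$, $t=T^k_j$, $\tau=t-s$ as you do, your route --- pass to the log-process, observe that conditioning on both endpoints removes the drift (the $m$-dependent part of the likelihood, $\tfrac{m}{\sigma^2}(\eta_t-\eta_s)-\tfrac{m^2\tau}{2\sigma^2}$, is a function of the endpoints and $\tau$ alone), then apply the reflection identity $\mbb P\big(\min_{[s,t]}\eta \le c \mid \eta_s=\alpha,\eta_t=\beta\big)=\exp\{-2(\alpha-c)(\beta-c)/(\sigma^2\tau)\}$ for $c<\min(\alpha,\beta)$ --- is the classical derivation of this Brownian-bridge crossing probability, and in substance it is what the cited source establishes. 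The individual steps are sound: by the Markov property, conditioning path events on $[s,t]$ on $\sigma(X_s,X_t)$ does give the bridge law; the sign flip in $(\alpha-c)(\beta-c)=\ln(\ell/X_s)\ln(\ell/X_t)$ is harmless since both factors change sign simultaneously; and the boundary case $\ell\ge\min(X_s,X_t)$ follows because path continuity forces $M_{[s,t)}\le\min(X_s,X_t)$. Your fallback computation via the joint density \eqref{eq:joint_density_min_x} of Lemma \ref{distr_runmin} also checks out --- after the substitution $\xi=\ln(b/r^2)$ both integrals evaluate to exponentials, and the ratio collapses to the stated closed form via $[\ln(X_s/\ell)+\ln(X_t/\ell)]^2-\ln^2(X_t/X_s)=4\ln(X_s/\ell)\ln(X_t/\ell)$ --- and it has the merit of staying entirely inside the paper's own toolbox rather than invoking bridge folklore. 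In short, what your write-up buys is a self-contained proof that the paper outsources to the literature; what the paper's citation buys is brevity.
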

\begin{proof}
	The proof is given in \textsc{Jeanblanc \& Valchev} \cite[Lemma 2]{jeanblanc-valchev}.
\end{proof}	 
The next theorem shows that the conditional survival probability given the information of a D-investor can be formulated in terms of $\mbb F^D$-conditional survival probabilities for which explicit formulas are derived.
\begin{theorem}\label{th:surv_prob_d-inv}
		Under the assumptions of Theorem \ref{theo_C_investor} for the conditional survival probability given the information of a D-investor it holds
	\begin{align}\label{eq:enl_fil_d-inv}
	\mbb P(\tau > T|\cal G^D_t) = \1{\tau >t }\frac{\mbb P(\tau > T|\cal F^D_t)}{\mbb P(\tau > t|\cal F^D_t)}\qquad \text{for } t < T,
	\end{align}
	where the $\mbb F^D$-conditional survival probabilities are given by the following formulas for $n\geq 2$:\\
	\begin{enumerate}
		\item For $t\in [T_i^k,T_{i+1}^k)$, $k=0,\ldots,n-2$, $i=0,\ldots,\J_k-1$, it holds
		\begin{align}
		\begin{split}\label{eq:sp_d_a}
		\mbb P(\tau >T|\cal F_t^D) =& \int_0^1\int_0^\infty \ldots \int_0^1\int_0^\infty f^{ M,X}_{T_0^{k+1}-T_i^{k}}(u_{k+1},v_{k+1}) \prod\limits_{j=k+2}^{n-1}f^{M,X}_{T^j_0-T^{j-1}_0}(u_j,v_j)\int_0^1f^{ M}_{T-T_0^{n-1}}(w)\\
		&\int_0^{\infty}\int_0^{\infty}\ldots\int_0^{\infty}\int_0^{u_{k+1}X_{T_i^k}}\int_0^{u_{k+2}v_{k+1}X_{T_i^k}}\ldots \int_0^{u_{n-1}v_{n-2}\ldots v_{k+1}X_{T_i^k}}\int_0^{w v_{n-1}\ldots v_{k+1} X_{T_i^k}}\prod_{j=0}^{k-1}\\
		&K^{j,\J_j}(\ell^{j+1})K^{k,i}(\ell^{k+1})f_{L^1,\ldots,L^n}(\ell^1,\ldots,\ell^n)\d \ell^n\ldots\d \ell^1 \d w \d v_{n-1}\d u_{n-1}\ldots \d v_{k+1}\d u_{k+1}.
		\end{split}
		\end{align}
		For $t\in[T_{i}^{n-1},T_{i+1}^{n-1})$, $i=0,\ldots,J_{n-1}-1$, it holds
		\begin{align}
		\begin{split}\label{eq:sp_d_b}
		\mbb P(\tau > T|\cal F^D_t) =& \int_0^\infty \ldots \int_0^\infty K^{n-1,i}(\ell^n)\prod_{j=0}^{n-2}K^{j,\J_j}(\ell^{j+1})\Psi\bigg(T-T_{i}^{n-1},\frac{\ell^n}{X_{T_{i}^{n-1}}} \bigg)f_{L^1,\ldots,L^n}(\ell^1,\ldots,\ell^n)\\
		& \d\ell^n \ldots\d \ell^1.
		\end{split}
		\end{align}
		\item For $t\in [T_i^k,T_{i+1}^k)$, $k=0,\ldots,n-1$, $i=0,\ldots,\J_k-1$, it holds
		\begin{align}\begin{split}\label{eq:sp_d_c}
		\mbb P(\tau > t | \cal F^D_t)= \int_0^\infty \ldots \int_0^\infty &\prod_{j=0}^{k-1}K^{j,\J_j}(\ell^{j+1})K^{k,i}(\ell^{k+1})\Psi\left(t-T_i^k,\frac{\ell^{k+1}}{X_{T_i^k}} \right)\\
		& f_{L^1,\ldots,L^{k+1}}(\ell^1,\ldots,\ell^{k+1})\d \ell^{k+1}\ldots\d \ell^1.
		\end{split}\end{align}
	\end{enumerate}
\end{theorem}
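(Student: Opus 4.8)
The plan is to mirror the structure of the proof of Theorem~\ref{theo_C_investor}, replacing the continuous filtration $\mbb F$ by the coarser $\mbb F^D$. First I would establish the ratio formula \eqref{eq:enl_fil_d-inv} by the same classical progressive-enlargement argument that yields \eqref{eq:enl_fil}: since $\mbb G^D$ is the progressive enlargement of $\mbb F^D$ by the default time $\tau$, and the independence assumption $\mathbf{L}\perp\cal F_T$ (hence $\mathbf{L}\perp\cal F_T^D$, as $\cal F_T^D\subseteq\cal F_T$) remains in force, the decomposition holds verbatim with $\cal F_t$ replaced by $\cal F_t^D$. The substance therefore lies in the two $\mbb F^D$-conditional probabilities, which I would obtain from the event decomposition $\{\tau>T\}=\bigcap_{k=0}^{n-1}\{L^{k+1}<M_{[t_k,t_{k+1})}\}$ together with the refinement $M_{[t_k,t_{k+1})}=\min_{j=1,\ldots,\J_k}M_{[T_{j-1}^k,T_j^k)}$ at the observation dates.

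Fixing $t\in[T_i^k,T_{i+1}^k)$, I would separate three regimes, each producing one of the factor types in the statement. For the fully observed sub-intervals, both endpoints $X_{T_{j-1}^k},X_{T_j^k}$ lie in $\cal F_t^D$, and by the Markov property the conditional laws of the sub-interval minima given the grid values are mutually independent; hence $\mbb P\big(\bigcap_j\{M_{[T_{j-1}^k,T_j^k)}>\ell\}\,\big|\,\{X_{T_j^k}\}\big)=\prod_j K_j^k(\ell)$, which is exactly $K^{j,\J_j}$ and $K^{k,i}$ via Lemma~\ref{lem:cond_prob_min}. For the current partial interval $[T_i^k,t)$ appearing in the survival-to-$t$ probability, the left endpoint $X_{T_i^k}$ is observed but $X_t$ is not, so after rescaling the geometric Brownian motion to start at $X_{T_i^k}$ the relevant quantity is the unconditional running-minimum tail $\Psi\big(t-T_i^k,\cdot/X_{T_i^k}\big)$ from Lemma~\ref{compl_distr}. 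For the future adjustment intervals beyond $t$, which are entirely unobserved, I would proceed exactly as in the C-investor proof, writing $X_{t+u}=X_{T_i^k}\cdot(\text{independent GBM increment})$, integrating the joint minimum--terminal densities $f^{M,X}$ of Lemma~\ref{distr_runmin}, chaining successive intervals through the terminal ratios $v_j$, and letting the final interval $[t_{n-1},T)$ contribute only the marginal minimum density $f^M$.

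Assembling the factors gives the three formulas. For $\mbb P(\tau>t\mid\cal F_t^D)$ only the barriers $L^1,\ldots,L^{k+1}$ are active, so I would integrate $\prod_{j=0}^{k-1}K^{j,\J_j}(\ell^{j+1})\,K^{k,i}(\ell^{k+1})\,\Psi\big(t-T_i^k,\ell^{k+1}/X_{T_i^k}\big)$ against $f_{L^1,\ldots,L^{k+1}}$, using $\mathbf{L}\perp\cal F_T$ to turn the conditional barrier probabilities into this density; this is \eqref{eq:sp_d_c}. For $\mbb P(\tau>T\mid\cal F_t^D)$ all $n$ barriers are active: the observed portions again yield the $K$-factors, while each future barrier is constrained below the corresponding rescaled future minimum, giving the nested upper limits $u_{k+1}X_{T_i^k},\,u_{k+2}v_{k+1}X_{T_i^k},\ldots,w\,v_{n-1}\cdots v_{k+1}X_{T_i^k}$ of \eqref{eq:sp_d_a}. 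When $t\in[T_i^{n-1},T_{i+1}^{n-1})$ there is no future adjustment interval, so the forward contribution collapses to the single tail $\Psi\big(T-T_i^{n-1},\ell^n/X_{T_i^{n-1}}\big)$, yielding \eqref{eq:sp_d_b}.

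I expect the principal obstacle to be bookkeeping rather than a new analytic ingredient: one must carefully justify the conditional independence of the sub-interval minima given the discrete observations (so that the joint conditional survival probability factors into $\prod_j K_j^k$), and correctly chain the unobserved future intervals through the ratios $v_j$ while tracking which barrier's integration limit each ratio feeds. The key departure from Theorem~\ref{theo_C_investor} is that $X_t$ is \emph{not} observed by the D-investor, so the current interval $[T_i^k,t_{k+1})$ must be treated as a single GBM segment started at the last observation $X_{T_i^k}$; this is where indexing errors are most likely to arise. As in Theorem~\ref{theo_C_investor}, I would carry out the explicit computation only for a small configuration exhibiting one completed, one current, and the remaining future intervals, and then indicate that the general case follows by the same recursion.
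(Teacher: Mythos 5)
Your proposal is correct and follows essentially the same route as the paper's proof: the same progressive-enlargement argument for \eqref{eq:enl_fil_d-inv}, the same event decomposition across adjustment and observation dates, the factorization of the conditional minimum probabilities into products of $K$-factors, and the treatment of unobserved stretches via $f^{M,X}$, $f^{M}$ and $\Psi$, carried out for a small configuration ($n=2$ in the paper) with the general case asserted by recursion. The conditional independence you flag as the main point to justify is exactly what the paper establishes, via a tower-property recursion $P_i(\ell)=P_{i-1}(\ell)K_i(\ell)$ for $P_i(\ell)=\mbb P(\ell<M_{T_i}\,|\,X_{T_0},\ldots,X_{T_i})$.
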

\begin{proof}
	The proof is presented in Appendix \ref{sec:app}.
\end{proof}
\begin{remark}
	For the special case of a constant but random default barrier, i.e., $n=1$ and $L_t=L^1$, the $\mbb F^D$-conditional survival probabilities are given by the following formulas: Let $T_j$, $j=0,\ldots,J-1$, where $0=T_0< T_1<\ldots<T_{\J-1}< T_J=T$ for $\J \in \mbb N$, denote the times where the D-investor obtains information about the firm's asset value. For $t\in [T_i,T_{i+1})$, $i=0,\ldots,J-1$, it holds
	\begin{align}\label{eq:cond_prob_d_giesecke}
	\mbb P(\tau > T|\cal G^D_t)=& \1{\tau >t }\frac{\int_0^1 \int_0^{X_{T_i}w} \prod_{j=1}^{i}K_j(\ell^{1}) f_{L^1}(\ell^{1})f^{\widehat M}_{T-T_i}(w)\d \ell^{1} \d w}{	\mbb P(\tau > t|\cal F^D_t)=\int_0^1 \int_0^{X_{T_i}w} \prod_{j=1}^{i}K_j(\ell^{1}) f_{L^1}(\ell^{1})f^{\widehat M}_{t-T_i}(w)\d \ell^{1} \d w},
	\end{align}
	where
	\begin{align*}
	K_j(\ell)=\mbb P(M_{[T_{j-1},T_j)}>\ell| X_{T_{j-1}},X_{T_j}).
	\end{align*}
\end{remark}
\begin{remark}
	Let $C_t$ be the price of a zero-coupon bond that matures at $T$ and has zero recovery. Assuming that the pricing probability is $\mbb P$, then $C_t$ is given by
	\begin{align*}
	C_t = e^{-r(T-t)}\mbb E[\1{\tau >T}\,|\,\cal G^D_t]
	=e^{-r(T-t)} \1{\tau >t }\frac{\mbb P(\tau > T|\cal F^D_t)}{\mbb P(\tau > t|\cal F^D_t)},
	\end{align*}
	where $r\geq 0$ is a discount factor. The credit spread $S_t$ is given by
	\begin{align*}
	S_t=-\frac{1}{T-t}\ln\{\mbb P(\tau > T|\cal G^D_t)\}.
	\end{align*}
\end{remark}	

\section{Numerical examples}\label{sec:numeric}
In this section we implement the formulas for the default probabilities derived in the previous section by evaluating the integrals using the Gauss–Kronrod quadrature formula (see \textsc{Monegato} \cite{Monegato}) and quantify numerically the impact of asymmetric information on the estimations of the default probabilities and credit spreads. More numerical examples can be found in \textsc{Redeker} \cite{redeker}.\\
The parameters for the firm's asset value process are taken from \textsc{Blanchet-Scalliet, Hillairet \& Jiao} \cite{blanchet-hillairet}, i.e.,
\begin{align*}
X_0=1, \quad \mu=0.05 \quad \text{and} \quad \sigma=0.8.
\end{align*}
We consider a time horizon of $T=2$ years and we suppose that a firm's management decides at $t_0=0$ on a default threshold $L^1$. Further, we assume that the management adjusts the default threshold at $t_1=1$ from $L^1$ to $L^2$. C-investors and D-investors only have knowledge on the (marginal and joint) laws of $\mathbf{L}=(L^1, L^2)$. The law of $\mathbf{L}=(L^1, L^2)$ is modeled by a copula. 
In the following examples C-investors and D-investors assume that $L^1$ is beta distributed with parameters $(\alpha,\beta)=(2,2)$ and $L^2$ is exponentially distributed with parameter $\lambda=2/3$. Further, investors assume that the law of $\mathbf{L}$ is given by a Gumbel copula $C$, i.e., 
\begin{align*}
C(x_1,x_2)=\exp\left\{-\left[(-\ln(x_1))^{\theta}+(-\ln(x_2))^{\theta}\right]^{\frac{1}{\theta}}\right\},
\end{align*}
for some $\theta \ge 1$ (see \textsc{Bluhm \& Overbeck} \cite{bluhm}). Thus, the correlation between $L^1$ and $L^2$ is modeled by the parameter $\theta$, where $\theta=1$ corresponds to the case of independent default thresholds.\\
The top panel of Figure \ref{fig:firm-value-nodefault} presents a realized trajectory of the firm's asset process, the switching default threshold and the running minimum of the firm value which is restarted after adjustment of the default threshold. We observe that no default occurs before maturity and the realizations of the default thresholds are given by $L^1(\omega) = 0.6$ and $L^2(\omega)=1.2$. The middle panel of Figure \ref{fig:firm-value-nodefault} shows the associated conditional survival probability given the information of a C-investor for different values of $\theta$ ($\theta=1, 2, 100$). We observe that the conditional survival probability converges to one, since no default has occurred by maturity. If $L^1$ and $L^2$ are independent, i.e., $\theta=1$, the estimate of the survival probability is smaller compared to the cases of dependent $L^1$ and $L^2$. However, differences in the conditional survival probability between $\theta=1$, $\theta=2$ and $\theta=100$ become smaller with increasing time. The bottom panel of Figure \ref{fig:firm-value-nodefault} illustrates the associated credit yield spread given the information of a C-investor for different values of $\theta$ ($\theta=1, 2, 100$). The credit spread tends to zero as the time $t$ approaches maturity $T$ since the C-investor has learned about the default threshold and knows that the firm is not subject to default in the next instance of time. If $L^1$ and $L^2$ are independent, i.e., $\theta=1$, the credit yield spread is higher compared to the cases of dependent $L^1$ and $L^2$.
\begin{figure}[h!]
	\centering
	\captionsetup{format=hang, justification=centerfirst, textfont=normalsize, labelfont=normalsize}
	\begin{subfigure}[b]{1.2\textwidth}
		\hspace*{-0.1\textwidth}
		\includegraphics[width=1\textwidth,height=0.23\textheight]{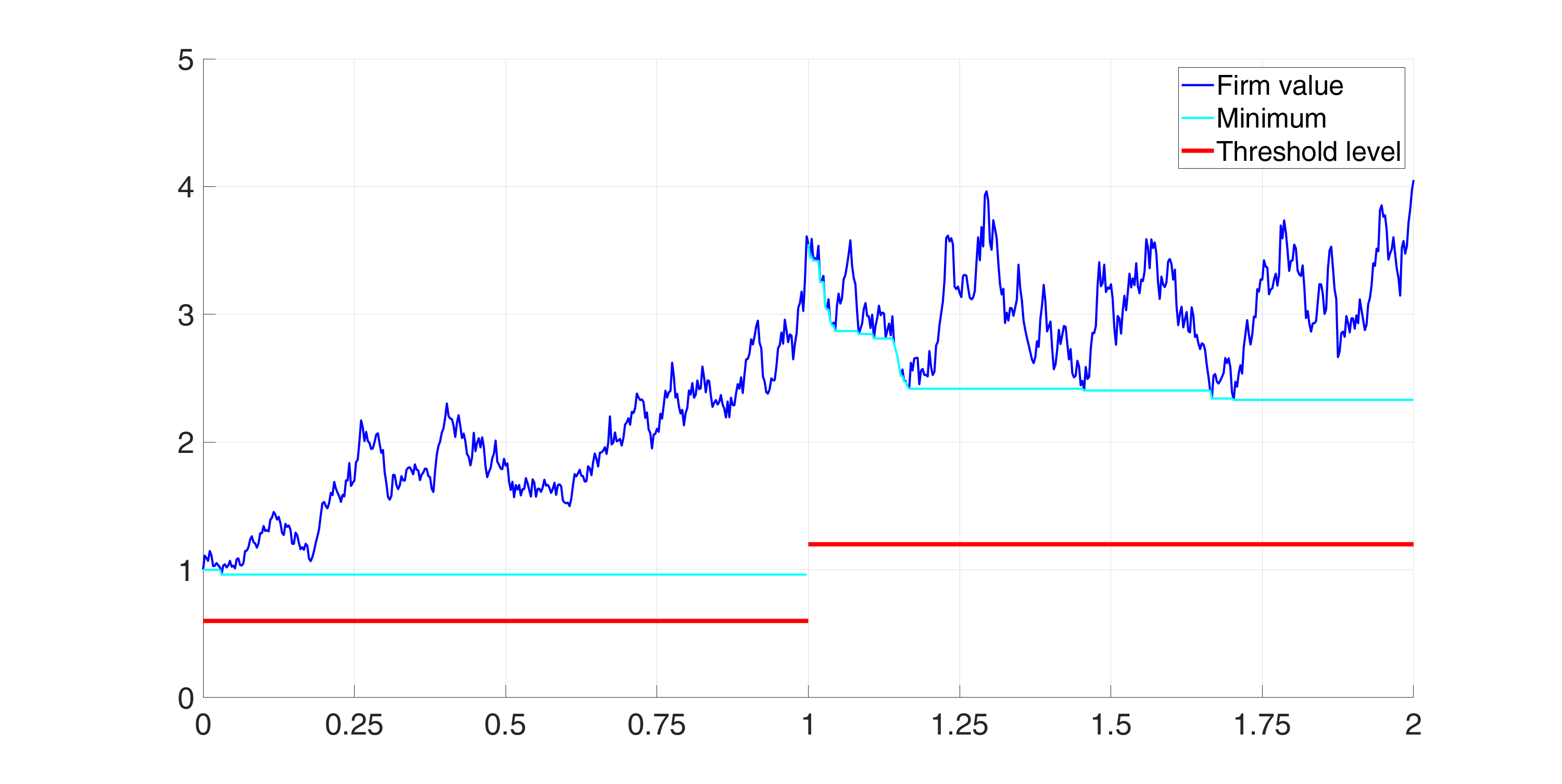}	
		\caption{Firm value, running minimum process and default threshold.}
	\end{subfigure}
	\begin{subfigure}[b]{1.2\textwidth}  
		\hspace*{-0.1\textwidth}
		\includegraphics[width=1\textwidth,height=0.23\textheight]{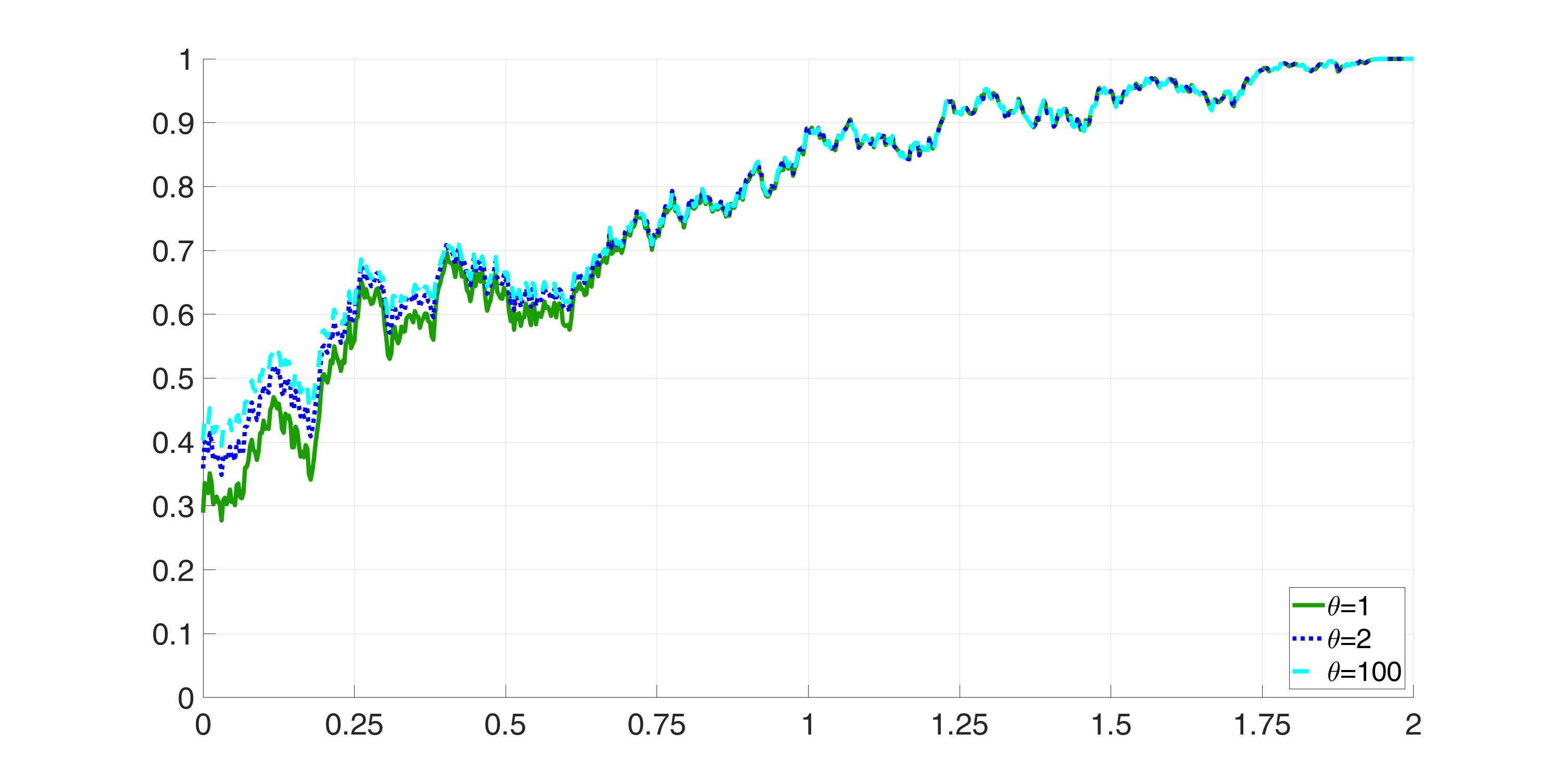}  
		\caption{Conditional survival probability given the information of a C-investor.}
	\end{subfigure}
	\begin{subfigure}[b]{1.2\textwidth}  
		\hspace*{-0.1\textwidth}\includegraphics[width=1\textwidth,height=0.23\textheight]{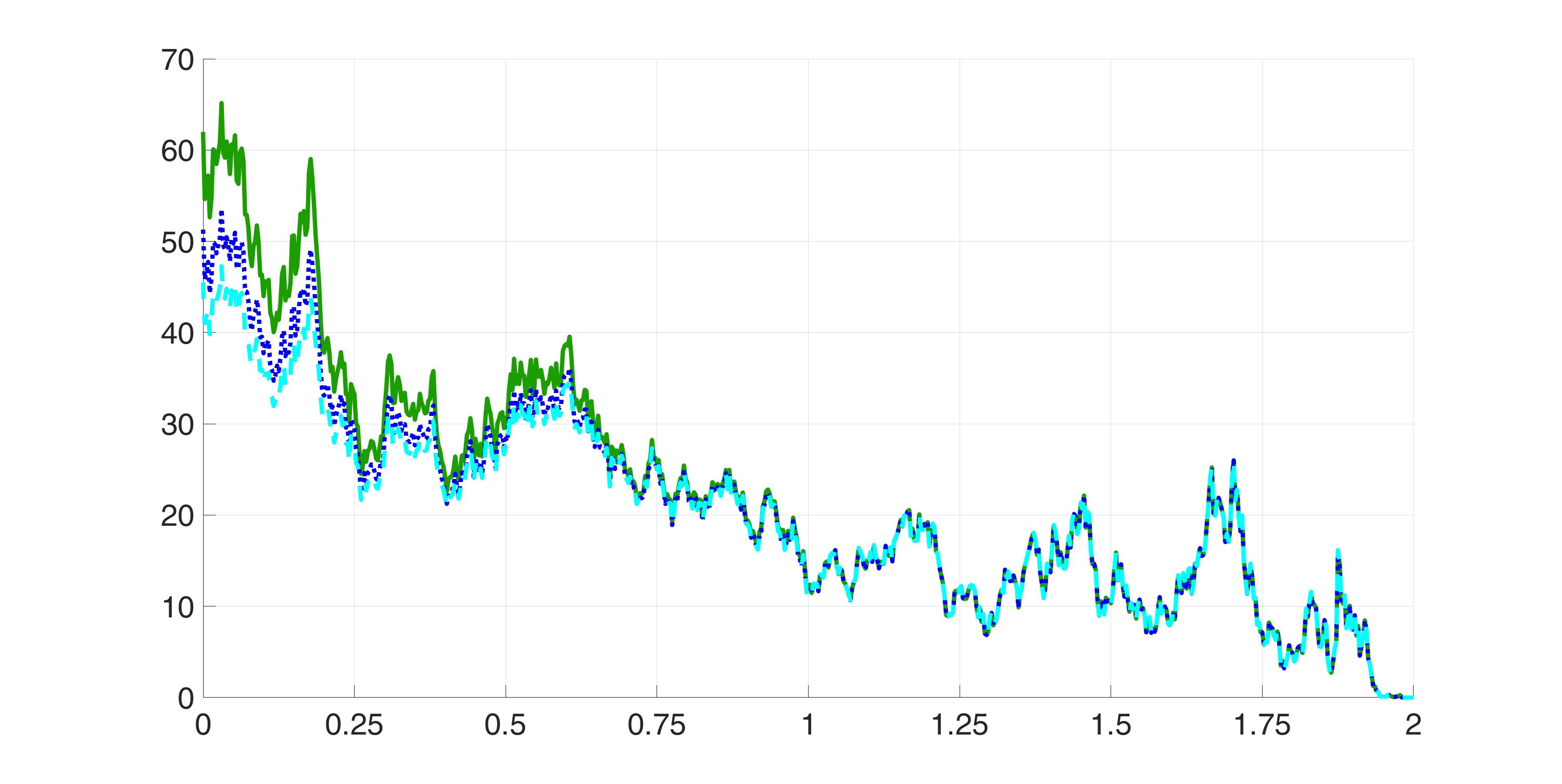}
		\caption{Credit yield spread given the information of a C-investor.}
	\end{subfigure}
   \captionsetup{format=hang, justification=raggedright,textfont=normalsize, labelfont=normalsize}
	\caption[Switching default thresholds - no default case: Firm value and the associated conditional survival probability and credit spread given the information of a C-investor for different values of $\theta$.]{Plot of a trajectory of the firm's asset process and the associated conditional survival probability and credit spread given the information of a C-investor for different values of $\theta$.}\label{fig:firm-value-nodefault}
\end{figure}\\
Figure \ref{fig:firm-value-nodefault-D} illustrates the conditional survival probability and the credit spread given the information of a D-investor for the case of independent default thresholds ($\theta=1$) and different information time points (biannually, quarterly, monthly). 
\begin{figure}[ht!]
	\centering
	\captionsetup{format=hang, justification=centerfirst, textfont=normalsize, labelfont=normalsize}
	\begin{subfigure}[b]{1.2\textwidth}		
		\hspace*{-0.1\textwidth}
		\includegraphics[width=1.0\textwidth,height=0.23\textheight]{Firm_Value_noDefault_1.png}
		\caption{Firm value, running minimum process and default threshold.}
	\end{subfigure}
	\begin{subfigure}[b]{1.2\textwidth}
		\hspace*{-0.1\textwidth}
		\includegraphics[width=1.0\textwidth,height=0.23\textheight]{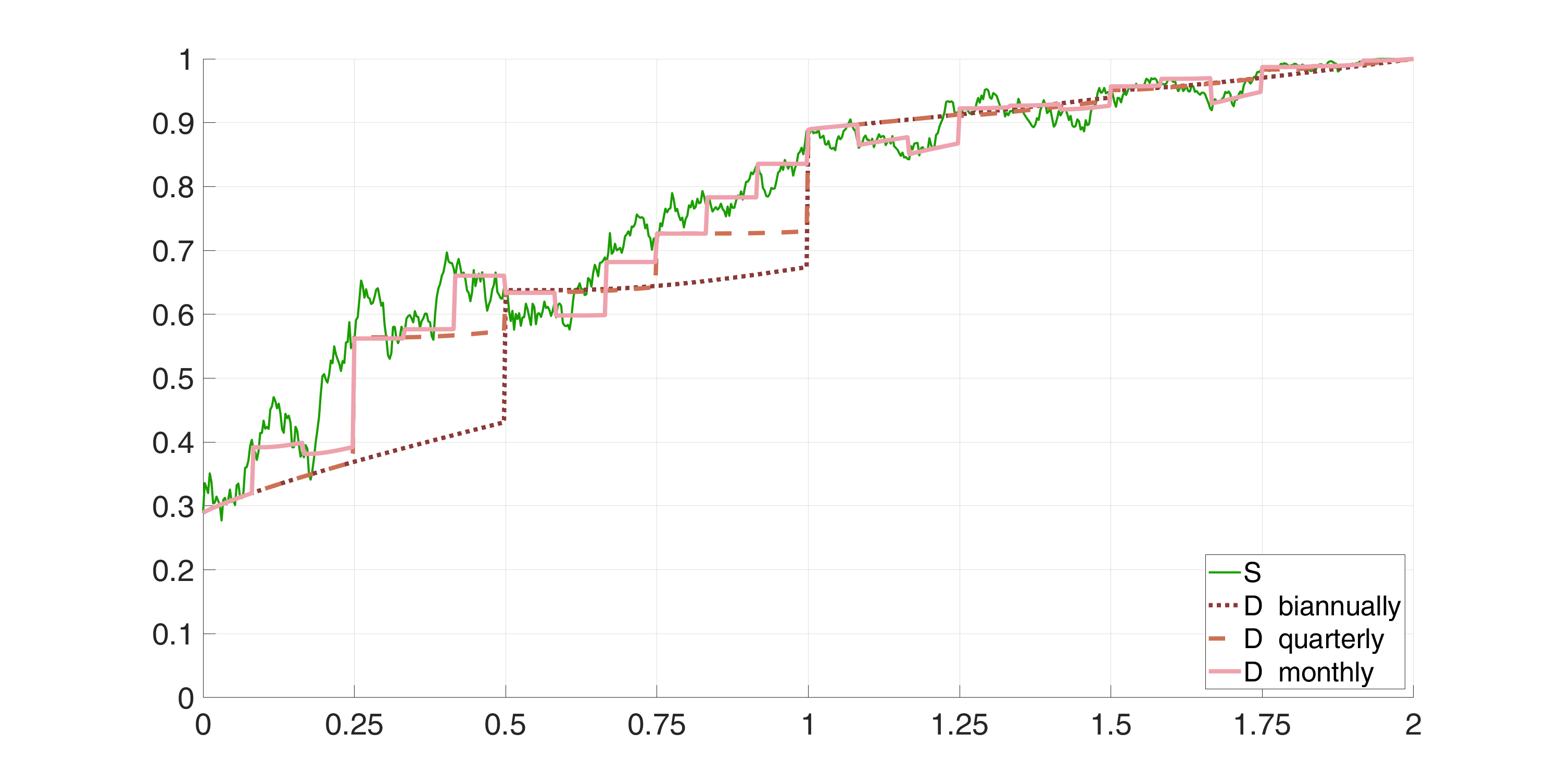}
		\caption{Conditional survival probabilities given the information of a C-investor and D-investors.}
	\end{subfigure}
	\begin{subfigure}[b]{1.2\textwidth}
		\hspace*{-0.1\textwidth}
		\includegraphics[width=1.0\textwidth,height=0.23\textheight]{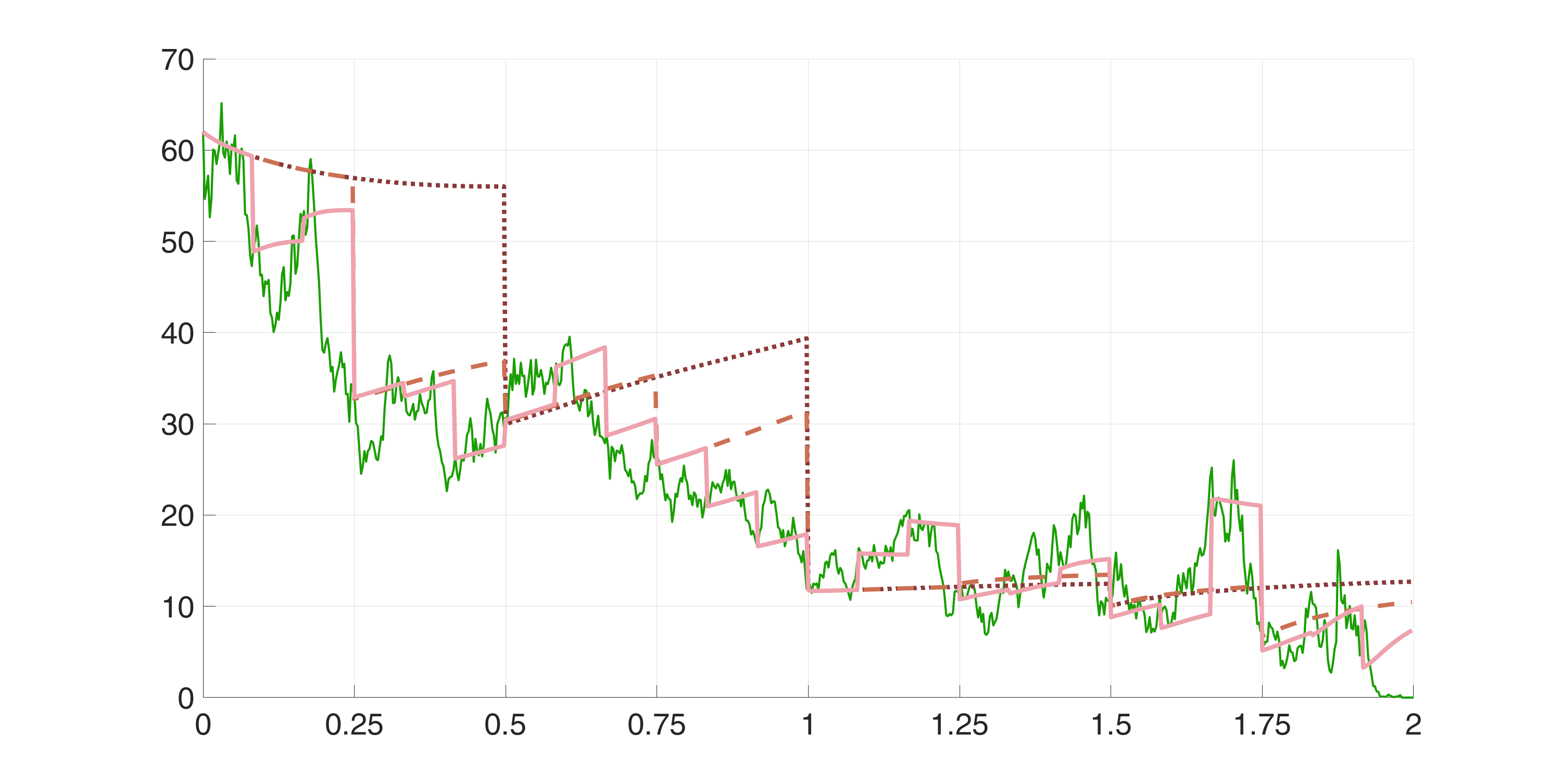}
		\caption{Credit yield spreads given the information of a C-investor and D-investors.}
	\end{subfigure}	
	\captionsetup{format=hang, justification=raggedright,textfont=normalsize, labelfont=normalsize}
	\caption[Switching default thresholds - no default case: Firm value and the associated conditional survival probabilities and credit spreads given the information of a C-investor and D-investors.]{Plot of a trajectory of the firm's asset process and the associated conditional survival probabilities and credit spreads given the information of a C-investor and D-investors for $\theta=1$.}\label{fig:firm-value-nodefault-D}
\end{figure}
We observe that the more frequently a D-investor obtains information about the firm value the closer are the D-investor's and C-investor's estimates of the survival probability. The credit yield spread given the information of a D-investor is non-zero at maturity, i.e., D-investors demand a risk premium for the default risk. D-investors who obtain information about the firm value twice a year demand the highest risk premium followed by D-investors who obtain that information  every quarter. The lowest risk premium is demanded by D-investors who obtain information about the firm value every month. This indicates that the risk premium depends on the frequency of observations of the firm value and the associated time to maturity at the last information date.  
Recall that the D-investors who obtain information about the firm value twice a year, every quarter and every month receive the last information before $T$ at $t=1.5$, $t=1.75$ and $t=1.9167$, respectively. At these time points the firm value is $X_{1.5} \approx 3.24$,  $X_{1.75} \approx 3.27$ and  $X_{1.9167} \approx 2.96$. Thus, the D-investor who observes the firm value every month demands the highest risk premium. Furthermore, the last observed firm value before $T$ is almost equal for the other two D-investors but the D-investor who obtains information twice a year demands a higher risk premium compared to the D-investor who observes  the firm value every quarter. This indicates that the risk premium also depends on the frequency of observations of the firm value and the associated time to maturity at the last information date.
\begin{figure}[h!]
	\centering
	\captionsetup{format=hang, justification=centerfirst, textfont=normalsize, labelfont=normalsize}
	\begin{subfigure}[b]{1.2\textwidth}		
		\hspace*{-0.1\textwidth}
		\includegraphics[width=1\textwidth,height=0.23\textheight]{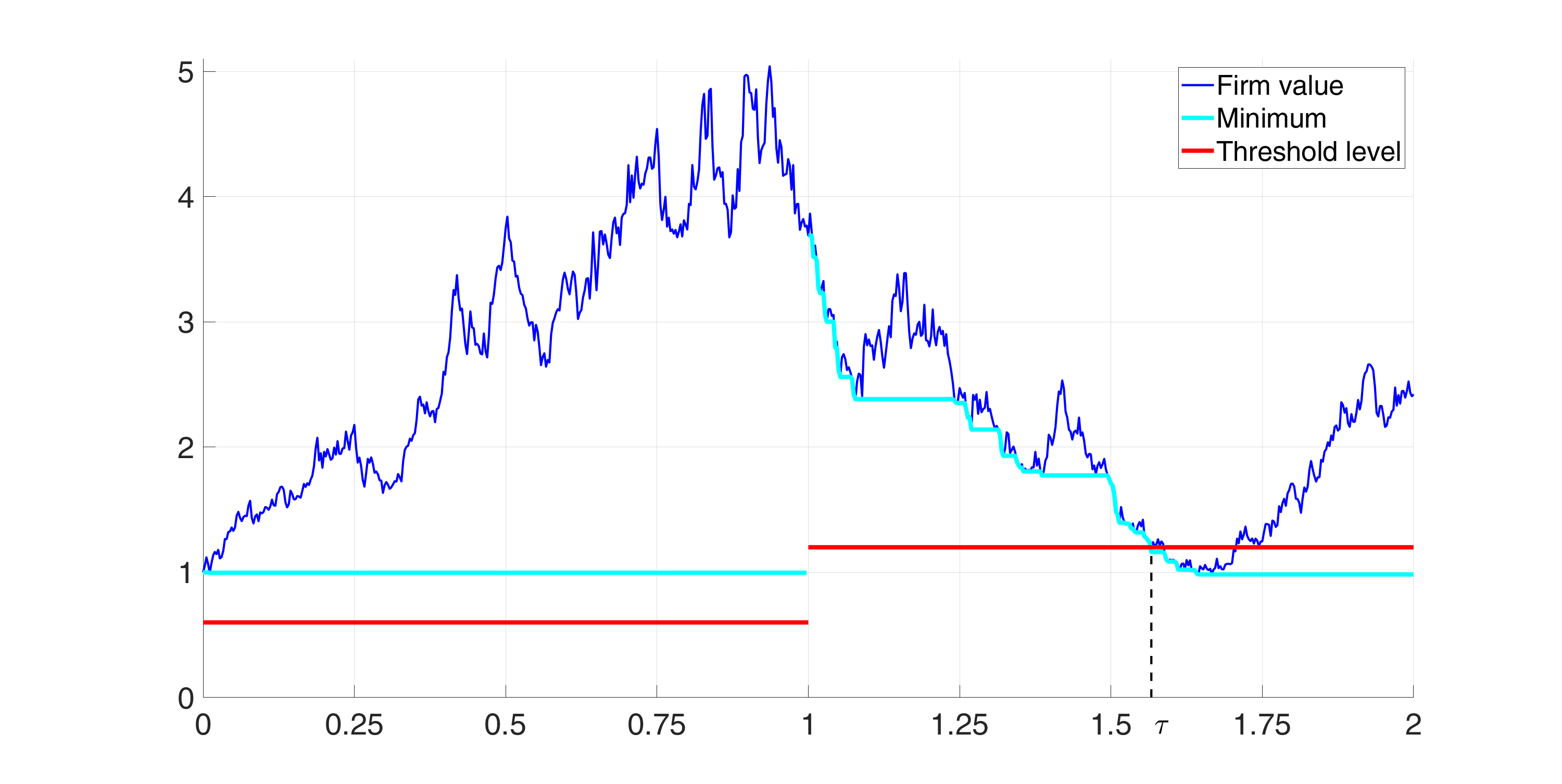}	
		\caption{Firm value, running minimum process and default threshold.}
	\end{subfigure}
	\begin{subfigure}[b]{1.2\textwidth}  
		\hspace*{-0.1\textwidth}
		\includegraphics[width=1\textwidth,height=0.23\textheight]{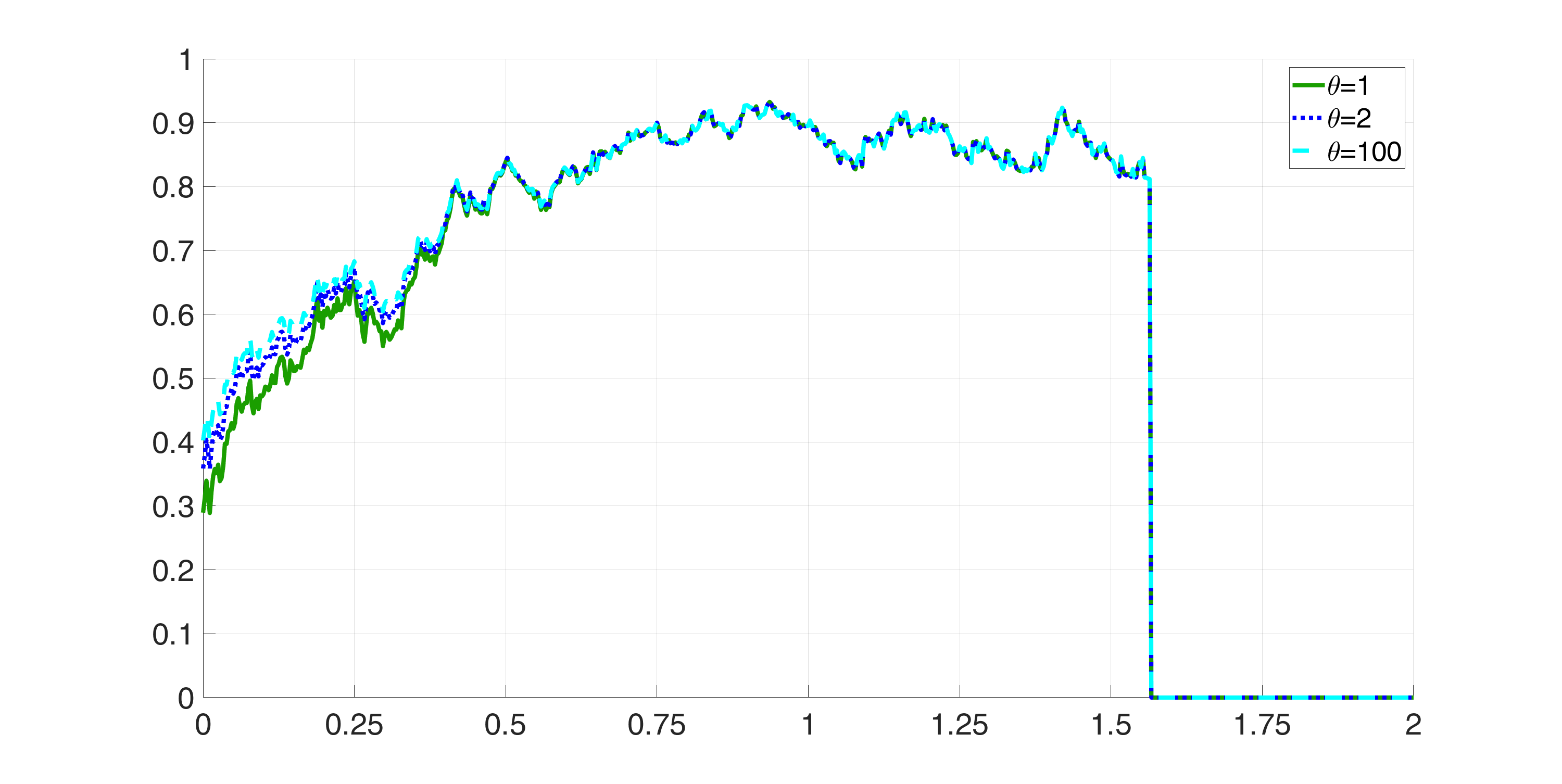}  
		\caption{Conditional survival probability given the information of a C-investor.}
	\end{subfigure}
	\begin{subfigure}[b]{1.2\textwidth}  
		\hspace*{-0.1\textwidth}\includegraphics[width=1\textwidth,height=0.23\textheight]{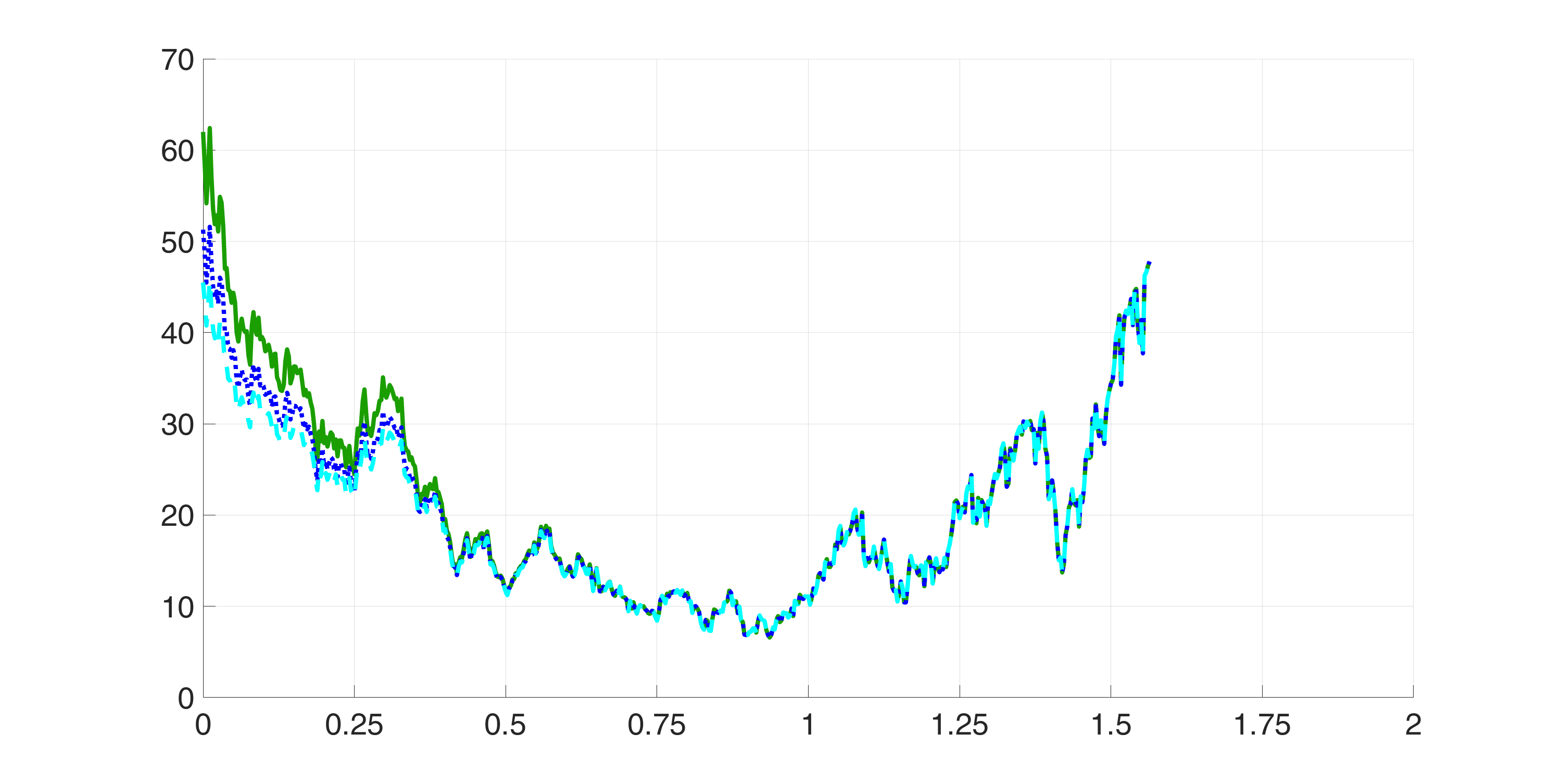}
		\caption{Credit yield spread given the information of a C-investor.}
	\end{subfigure}
	\captionsetup{format=hang, justification=raggedright,textfont=normalsize, labelfont=normalsize}
	\caption[Switching default thresholds - default case: Firm value and the associated conditional survival probability and credit spread given the information of a C-investor for different values of $\theta$.]{Plot of a trajectory of the firm's asset process and the associated conditional survival probability and credit spread given the information of a C-investor for different values of $\theta$.}\label{fig:firm-value-default}
\end{figure}\\
A second example is presented by Figure \ref{fig:firm-value-default}. The top panel shows a realized trajectory of the firm's asset process, the switching default threshold and the running minimum of the firm value which is restarted after adjustment of the default threshold. We observe that default occurs in the second year. The middle and bottom panel illustrate the associated conditional survival probability and credit yield spread given the information of a C-investor for different values of $\theta$ ($\theta=1, 2, 100$). We observe that the conditional survival probability jumps to zero at the time of default. 
\begin{figure}[ht!]
	\centering
	\captionsetup{format=hang, justification=centerfirst, textfont=normalsize, labelfont=normalsize}
	\begin{subfigure}[b]{1.2\textwidth}		
		\hspace*{-0.1\textwidth}
		\includegraphics[width=1.0\textwidth,height=0.23\textheight]{Firm_Value_Default_1.png}
		\caption{Firm value, running minimum process and default threshold.}
	\end{subfigure}
	\begin{subfigure}[b]{1.2\textwidth}
		\hspace*{-0.1\textwidth}
		\includegraphics[width=1.0\textwidth,height=0.23\textheight]{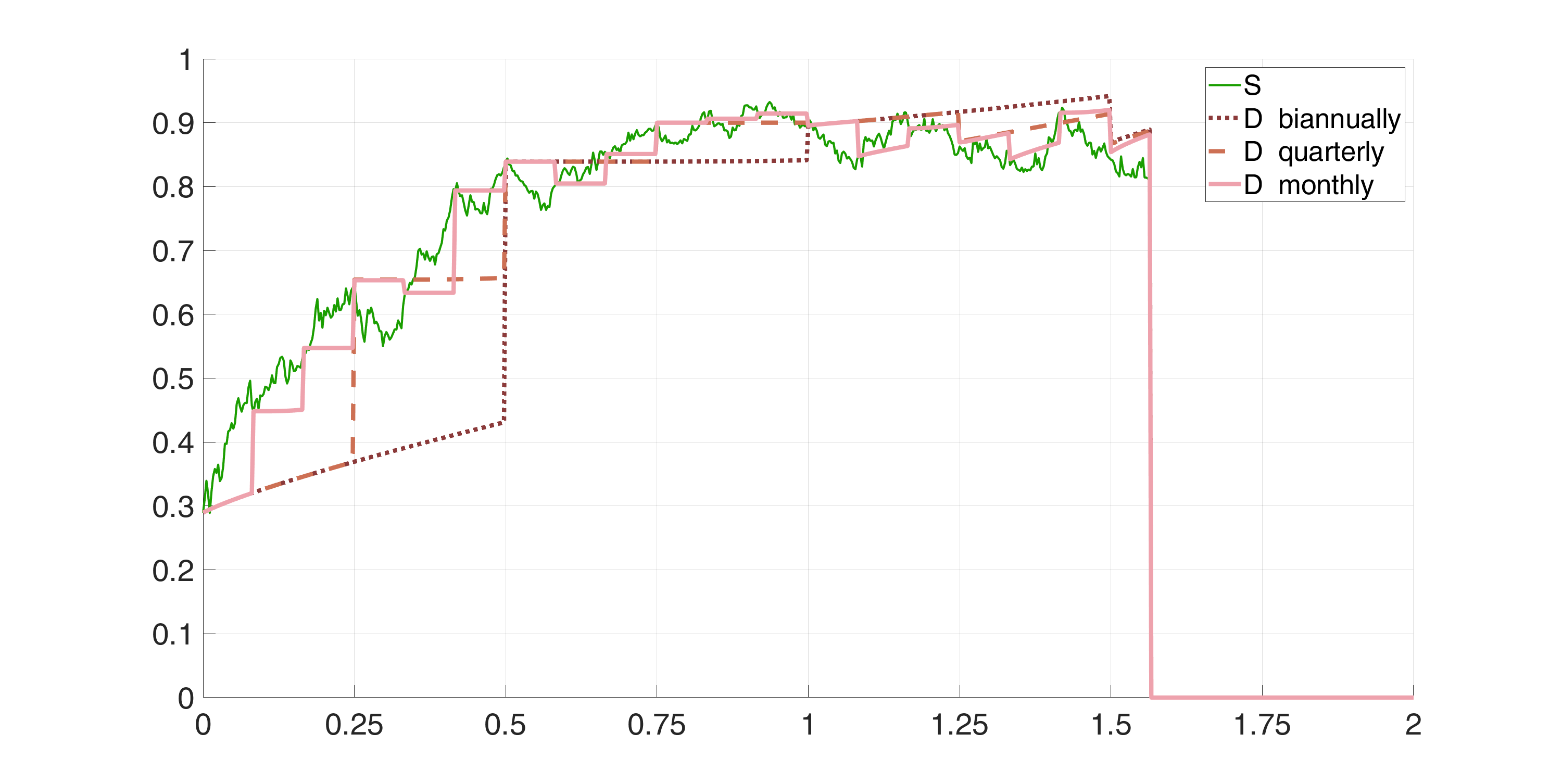}
		\caption{Conditional survival probabilities given the information of a C-investor and D-investors.}
	\end{subfigure}
	\begin{subfigure}[b]{1.2\textwidth}
		\hspace*{-0.1\textwidth}
		\includegraphics[width=1.0\textwidth,height=0.23\textheight]{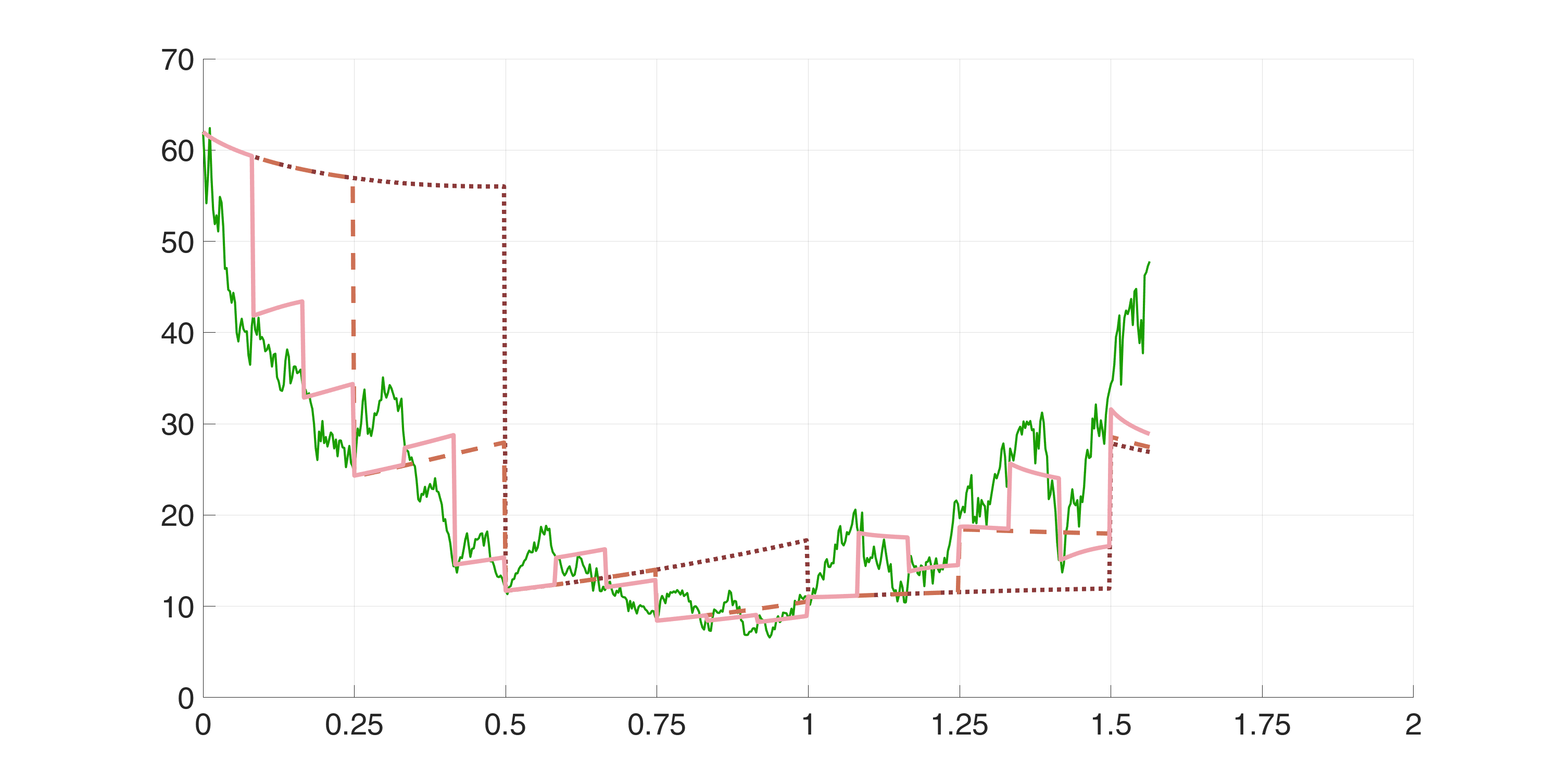}
		\caption{Credit yield spreads given the information of a C-investor and D-investors.}
	\end{subfigure}	
	\captionsetup{format=hang, justification=raggedright,textfont=normalsize, labelfont=normalsize}
	\caption[Switching default thresholds - default case: Firm value and the associated conditional survival probabilities and credit spreads given the information of a C-investor and D-investors.]{Plot of a trajectory of the firm's asset process and the associated conditional survival probabilities and credit spreads given the information of a C-investor and D-investors for $\theta=1$.}\label{fig:firm-value-default-D}
\end{figure}\\
Figure \ref{fig:firm-value-default-D} illustrates the conditional survival probability and credit yield spread given the information of a D-investor for the case of independent default thresholds ($\theta=1$). The D-investors obtain information about the firm value twice a year, every quarter and every month, respectively. We observe that the conditional survival probabilities jump to zero at the time of default.\\
\begin{figure}[ht!]
	\centering
	\captionsetup{format=hang, justification=centerfirst, textfont=normalsize, labelfont=normalsize}
	\begin{subfigure}[b]{1.2\textwidth}		
		\hspace*{-0.1\textwidth}
		\includegraphics[width=1.0\textwidth,height=0.23\textheight]{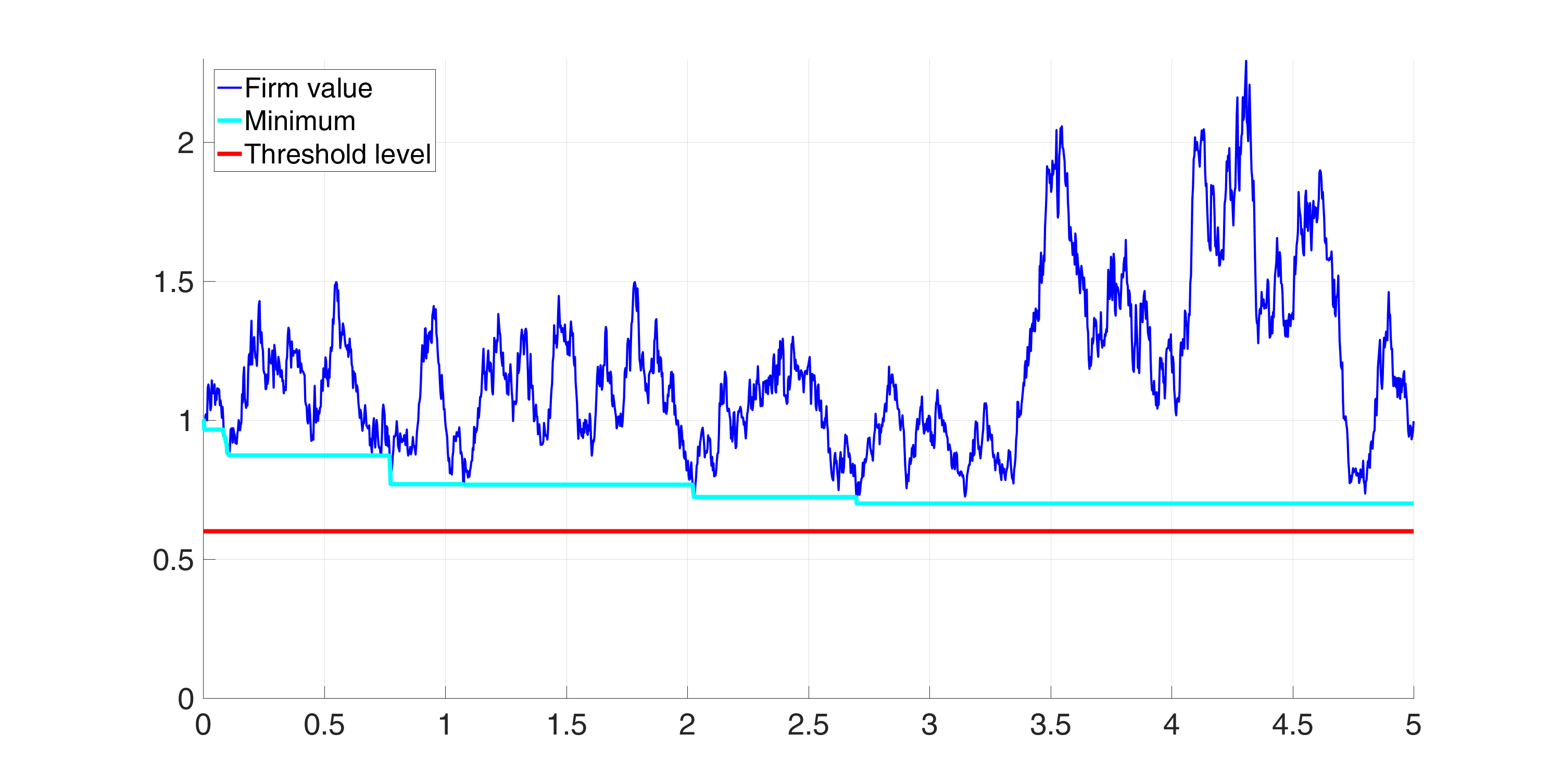}
		\caption{Firm value, running minimum process and default threshold.}
	\end{subfigure}
	\begin{subfigure}[b]{1.2\textwidth}
		\hspace*{-0.1\textwidth}
		\includegraphics[width=1.0\textwidth,height=0.23\textheight]{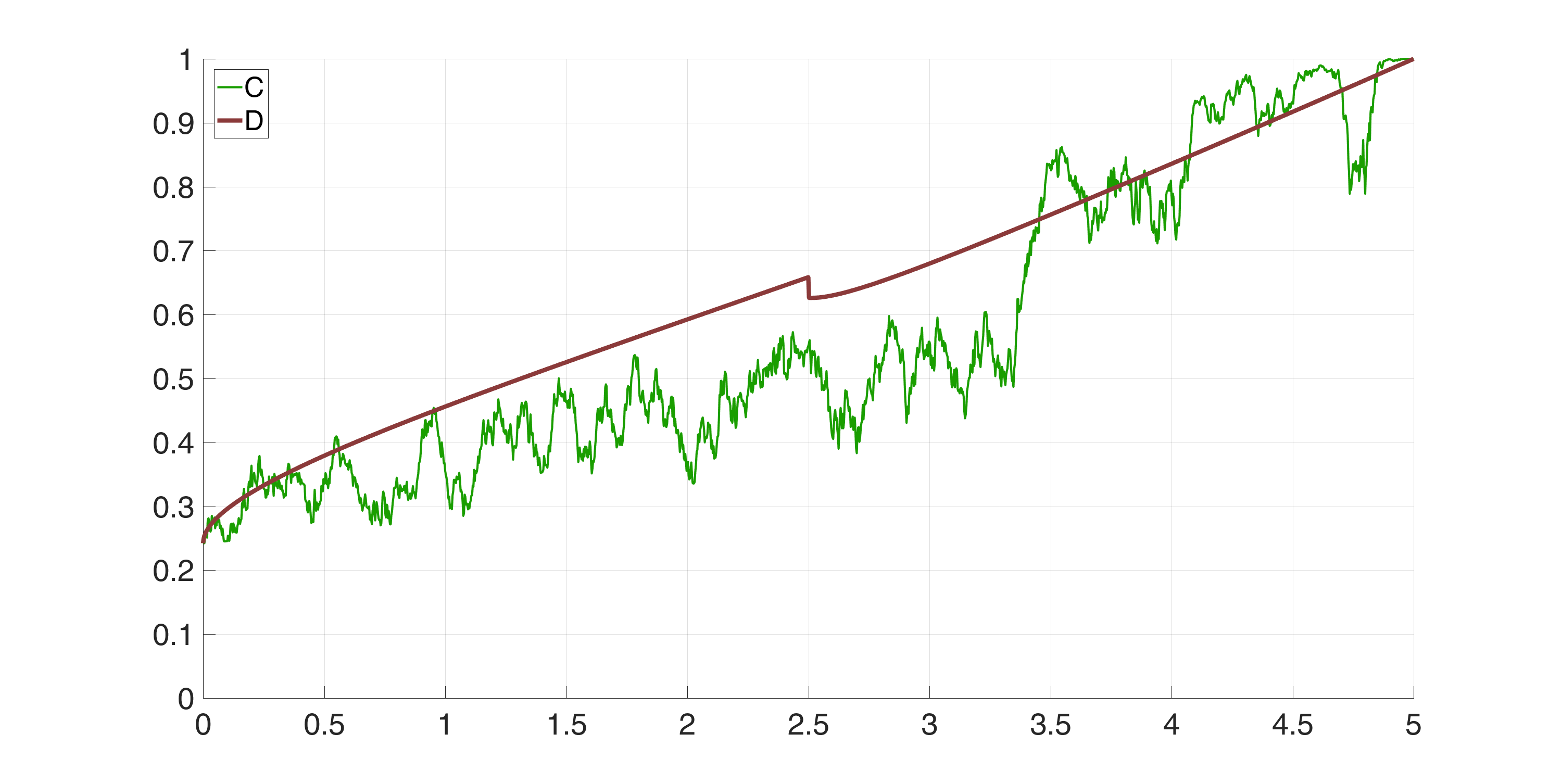}
		\caption{Conditional survival probabilities given the information of a C-investor and a D-investor.}
	\end{subfigure}
	\begin{subfigure}[b]{1.2\textwidth}
		\hspace*{-0.1\textwidth}
		\includegraphics[width=1.0\textwidth,height=0.23\textheight]{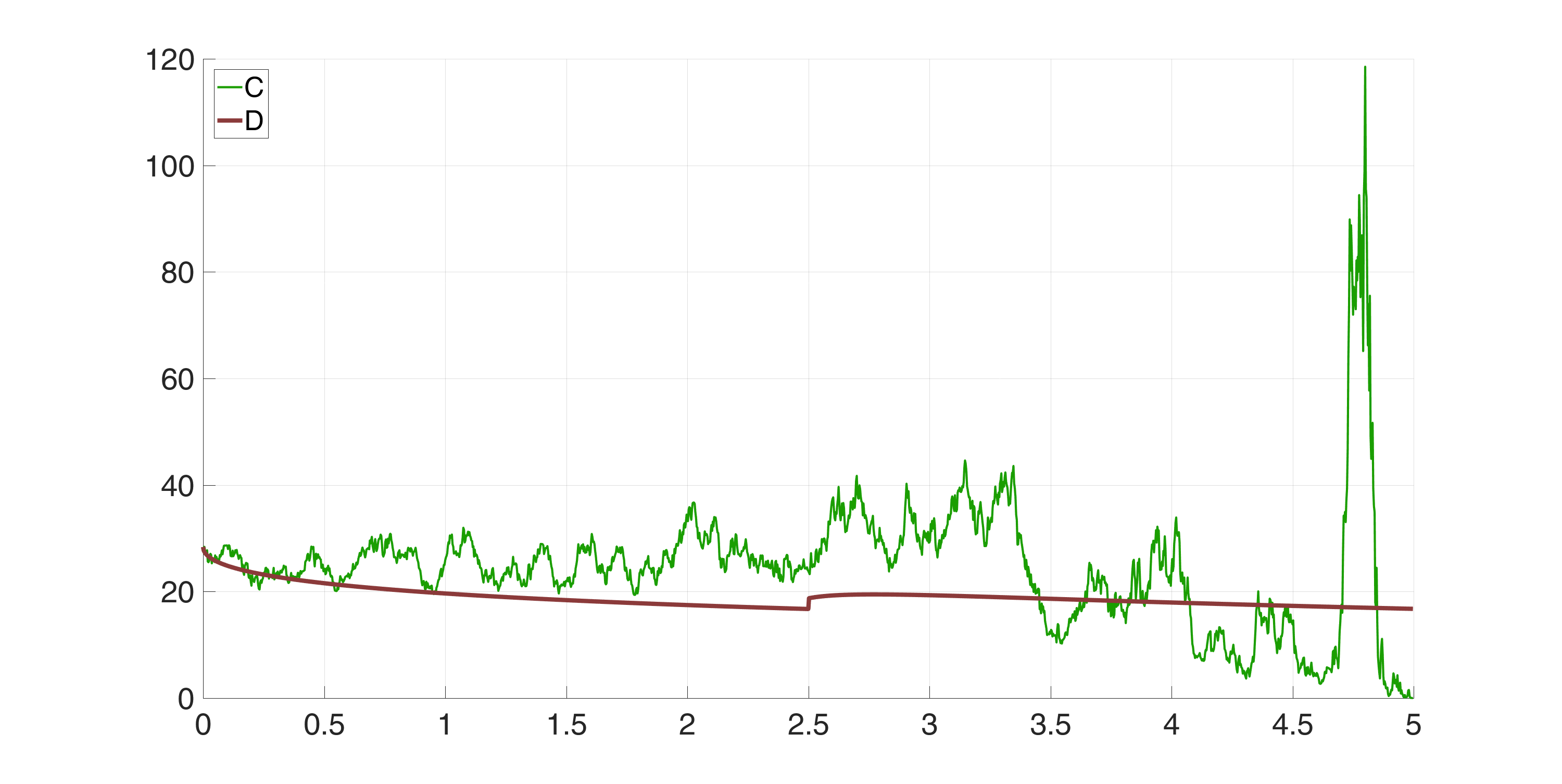}
		\caption{Credit yield spreads given the information of a C-investorand a D-investor.}
	\end{subfigure}	
	\captionsetup{format=hang, justification=raggedright,textfont=normalsize, labelfont=normalsize}
	\caption{Plot of a trajectory of the firm's asset process and the associated conditional survival probabilities and credit yield spreads given the information of a C-investor and a D-investor. The D-investor obtains information about the firm value once in 2.5 years.}\label{fig:surv-prob-L20}
\end{figure}
We note that the estimates of the conditional survival probabilities given the information of a C-investor and a D-investor are very close at the D-investor's information dates in the examples presented above. This is no longer the case if we extend the time horizon. We present a special case, where the default threshold is taken to be just a random constant of the form $y = L$ as proposed in \textsc{Giesecke \& Goldberg} \cite{giesecke-goldberg} and the time horizon is extended to $T=5$ years. C-investors and D-investors assume that $L$ follows a standard uniform distribution. The top panel of Figure \ref{fig:surv-prob-L20} shows a realized trajectory of the firm value, the associated running minimum and a realization of the default threshold. The middle and bottom panel show the associated conditional survival probabilities and credit yield spreads given the information of a C-investor and a D-investor, respectively. The D-investor obtains information about the firm value once in 2.5 years. We observe that the C-investor's estimate of the conditional survival probability and the D-investor's estimate of the conditional survival probability are now visibly different at the information date of the D-investor. The credit yield spread at $T$ is zero given the information of the C-investor and nonzero given the information of the D-investor.

\section{Conclusion}\label{sec:conclusio}
This paper extends the traditional structural model for credit risk. In the proposed model the default of a firm is triggered when the value of the firm's assets falls below a threshold which is modeled as a sequence of  random variables whose values are chosen by the management of the firm and dynamically adjusted accounting for changes in the economy or the appointment of a new firm management. Investors on the market have no access to the value of the threshold and only anticipate its distribution. Different information levels on the firm value are distinguished and explicit formulas for the conditional default probability and associated credit yield spreads given these information levels are derived. Numerical illustrations are provided which show that the information level has a considerable impact on the estimation of the default probability and the associated credit yield spread. Investors who have perfect information on the value process of the firm learn about the default threshold, i.e., they learn that the default threshold must lie below the current running minimum of the firm value if default has not yet occurred. Thus, the larger the distance of the firm value to the running minimum the less likely is a default and investors adjust their estimation of the default probability accordingly. The associated credit yield spreads are high if the firm value is close to its running minimum and low otherwise. Especially, if the firm value is far above its running minimum just before maturity investors know that there will be no default in the next instance of time and they do not demand a default risk premium, i.e., the credit spread is zero. This is different for investors who do not have full access to the value process of the firm. Investors who only observe the firm value at specific dates cannot be certain about the firm value just before maturity and they demand a nonzero default risk premium. Furthermore, the credit spreads at maturity depend on their last observed firm value and the frequency of firm value observations. In future research the dynamics of defaultable bonds are studied in this model set-up.
\appendix
\section{Proof of Theorem \ref{th:surv_prob_d-inv}}\label{sec:app}
Observe that $\mbb F^D \subset \mbb F$. Then the proof of Eq. \eqref{eq:enl_fil_d-inv} is along the same line as the proof of Eq. \eqref{eq:enl_fil} for the case of a C-investor. For the sake of a simple notation we prove the remaining formulas of the theorem for the case $n=2$, i.e., the threshold is $L^1$ in the interval $[t_0,t_1)$ and $L^2$ in the interval $[t_1,T)$. The proof for $n>2$ is along the same line and skipped. In order to keep the notation simple we denote the times at which the D-investor observes the asset process by $T_i$, $i=0,\ldots,J-1$, where $T_0:=t_0$, $T_{J}:=T$ and $T_j=t_1$ for some $j \in \{1,\ldots,\J-1\}$. The last term ensures that the D-investor obtains information about the asset process at the adjustment time $t_1$ of the default threshold. We define the processes $(Y^i_u)_{u\geq0}$ by
\begin{align*}
Y^i_u =  \exp\{mu+\sigma \widehat B^i_{u} \},\qquad \text{for } i=1,\ldots,\J-1,
\end{align*}
where $(\widehat B^i_u)_{u\geq 0}$ is a Brownian motion independent of $\cal F_{T_i}$ and given by $\widehat B^i_u = B_{T_i+u}-B_{T_i}$. Note that $Y^i_u$ inherits the independence of $\cal F_{T_i}$ and it has the same law as $X_u$. Further we have the decomposition $X_s = X_{T_i} Y^i_{s-T_i}$ for $s > T_i$. We denote by $(\widehat M^i_u)_{u\geq 0}$ the running minimum of $Y^i$, i.e.,
\be 
\widehat M^i_u = \inf_{s < u} Y^i_s.
\ee
\myparagraph{Proof of \eqref{eq:sp_d_a}:} 
Let $t\in [t_0,t_1)$ and $t\in [T_i,T_{i+1})$ with $i<j$, i.e., $T_{i+1}\leq T_j =t_1$, then
	\begin{align*}
	\{\tau >T\}=&\{L^1 < M_{t_1}\}\cap\{L^2 < M_{[t_1,T)}\}=\{L^1 < M_{T_i}\}\cap\{L^1 < M_{[T_i,t_1)}\}\cap \{ L^2 < M_{[t_1,T)}\}\\
	=&\{L^1 < M_{T_i}\}\cap\{L^1 < M_{[T_i,T_j)}\}\cap \{ L^2 < M_{[T_j,T)}\}\\
	=&\{L^1 < M_{T_i}\}\cap\{L^1 < \widehat M^i_{T_j-T_i}X_{T_i}\}\cap\{L^2 < \widehat M^j_{T-T_j}Y^i_{T_j-T_i}X_{T_i}\}.
	\end{align*}
	Then we have
	\begin{align*}
	\mbb P(\tau >T|\cal F^D_t) =& \mbb P(\tau >T|\cal F^D_{T_i})=\mbb P(L^1 < M_{T_i}, L^1 < \widehat M^i_{T_j-T_i}X_{T_i}, L^2 < \widehat M^j_{T-T_j}Y^i_{T_j-T_i}X_{T_i}|\cal F^D_{T_i})\\
	=&\int_0^1 \int_{0}^{\infty} \mbb P(L^1 < M_{T_i}, L^1 < uX_{T_i}, L^2 < \widehat M^j_{T-T_j}vX_{T_i}|\cal F^D_{T_i})f^{\widehat M^i,Y^i}_{T_j-T_i}(u,v)\d v \d u\\
	=&\int_0^1\int_{0}^{\infty} \int_0^1 \mbb P(L^1 < M_{T_i}, L^1 < uX_{T_i}, L^2 < wvX_{T_i}|\cal F^D_{T_i})f^{\widehat M^j}_{T-T_j}(w) f^{\widehat M^i,Y^i}_{T_j-T_i}(u,v)\d w \d v \d u\\
	=&\int_0^1\int_{0}^{\infty}\int_0^1  \int_0^{\infty} \int_0^{\infty} \mbb P(\ell^1 < M_{T_i}, \ell^1 < uX_{T_i}, \ell^2 < wvX_{T_i}|\cal F^D_{T_i})\\
	&\qquad \qquad\qquad f_{L^1,L^2}(\ell^1,\ell^2)f^{\widehat M^j}_{T-T_j}(w) f^{\widehat M^i,Y^i}_{T_j-T_i}(u,v)\d \ell^2 \d \ell^1\d w \d v \d u\\
	=&\int_0^1\int_{0}^{\infty}\int_0^1  \int_0^{uX_{T_i}} \int_0^{wvX_{T_i}} \mbb P(\ell^1 < M_{T_i}|\cal F^D_{T_i}) f_{L^1,L^2}(\ell^1,\ell^2)f^{\widehat M^j}_{T-T_j}(w) f^{\widehat M^i,Y^i}_{T_j-T_i}(u,v)\d \ell^2 \d \ell^1\d w \d v \d u,
	\end{align*}	
	where we have used in the second to last equation that $(L^1,L^2)$ is independent of $\cal F_T$. Since $Y^i_u \overset{d}{=}X_u$, for $i=1,\ldots,J-1$, $f_t^{\widehat M^i}$ and $f_t^{\widehat M^i, Y^i}$ are given in \eqref{eq:density_min} and \eqref{eq:joint_density_min_x}, respectively. Before we calculate the probability in the above integrand we make the following notation
	\begin{align*}
	P_i(\ell) = \mbb P(\ell < M_{T_i}|X_{T_0},\ldots,X_{T_i}).
	\end{align*}
	It holds
	\begin{align*}
	\{\ell < M_{T_i}\}&=\{\ell < \inf_{s<T_i}X_s\}=\{\ell < \inf_{s<T_{i-1}}X_s\}\cap\{\ell < \inf_{T_{i-1}\leq s<T_i}X_s\}=\{\ell < M_{T_{i-1}}\}\cap\{\ell <M_{[T_{i-1},T_i)}\}.
	\end{align*}
	We have
	\begin{align*}
	P_i(\ell^1) =& \mbb P(\ell^1 < M_{T_i}|X_{T_0},\ldots,X_{T_i})=\mbb E[\1{\ell^1 < M_{T_i}}|X_{T_0},\ldots,X_{T_i}]\\
	=&\mbb E[\1{\ell^1 < M_{T_{i-1}}}\1{\ell^1 < M_{[T_{i-1},T_i)}}|X_{T_0},\ldots,X_{T_i}]\\
	=&\mbb E[\mbb E[\1{\ell^1 < M_{T_{i-1}}}\1{\ell^1 < M_{[T_{i-1},T_i)}}| X_s: s\leq T_{i-1},X_{T_i}]|X_{T_0},\ldots,X_{T_i}],
	\end{align*}
	where the last equation follows from the tower property of the conditional expectation since 
	\begin{align*}
	\sigma(X_{T_0},\ldots,X_{T_i}) \subset \cal F_{T_{i-1}} \vee \sigma(X_{T_i}).
	\end{align*}
	We obtain
	\begin{align*}
	P_i(\ell^1)=&\mbb E[\mbb E[\1{\ell^1 < M_{T_{i-1}}}\1{\ell^1 < M_{[T_{i-1},T_i)}}| X_s: s\leq T_{i-1},X_{T_i}]|X_{T_0},\ldots,X_{T_i}]\\
	=&\mbb E[\1{\ell^1 < M_{T_{i-1}}}\mbb E[\1{\ell^1 < M_{[T_{i-1},T_i)}}| X_s: s\leq T_{i-1},X_{T_i}]|X_{T_0},\ldots,X_{T_i}]\\
	=&\mbb E[\1{\ell^1 < M_{T_{i-1}}}\mbb P(\ell^1 < M_{[T_{i-1},T_i)}| X_s: s\leq T_{i-1},X_{T_i})|X_{T_0},\ldots,X_{T_i}],
	\end{align*}
	where we have used that $M_{T_{i-1}}$ is $\cal F_{T_{i-1}}$-measurable. The probability in the above equation is rewritten as
	\begin{align*}
	\mbb P(\ell^1 < M_{[T_{i-1},T_i)}| X_s: s\leq T_{i-1},X_{T_i})=&\mbb P(\ell^1 < \inf_{T_{i-1}\leq s<T_i}X_s| X_s: s\leq T_{i-1},X_{T_i})\\
	=&\mbb P(\ell^1 < \inf_{T_{i-1}\leq s<T_i}X_{T_{i-1}}Y^{i-1}_{s-T_{i-1}}| X_s: s\leq T_{i-1},X_{T_i})\\
	=&\mbb P(\ell^1/X_{T_{i-1}} < \widehat M^{i-1}_{T_i-T_{i-1}}| X_s: s\leq T_{i-1},X_{T_i})\\
	=&\mbb P(\ell^1/X_{T_{i-1}} < \widehat M^{i-1}_{T_i-T_{i-1}}| X_{T_{i-1}},X_{T_i}),
	\end{align*}
	where last equation holds since $\widehat M^{i-1}_{T_i-T_{i-1}}$ is independent of $\cal F_{T_{i-1}}$. Finally, we obtain the following recursion formula
	\begin{align*}
	P_i(\ell^1)=&\mbb E[\1{\ell^1 < M_{T_{i-1}}}\mbb P(\ell^1 < M_{[T_{i-1},T_i)}| X_s: s\leq T_{i-1},X_{T_i})|X_{T_0},\ldots,X_{T_i}]\\
	=&\mbb E[\1{\ell^1 < M_{T_{i-1}}}\mbb P(\ell^1/X_{T_{i-1}} < \widehat M^{i-1}_{T_i-T_{i-1}}| X_{T_{i-1}},X_{T_i})|X_{T_0},\ldots,X_{T_i}]\\
	=&\mbb E[\1{\ell^1 < M_{T_{i-1}}}|X_{T_0},\ldots,X_{T_{i-1}}]\mbb P(\ell^1/X_{T_{i-1}} < \widehat M^{i-1}_{T_i-T_{i-1}}| X_{T_{i-1}},X_{T_i})\\
	=&\mbb P(\ell^1 < M_{T_{i-1}}|X_{T_0},\ldots,X_{T_{i-1}})\mbb P(\ell^1/X_{T_{i-1}} < \widehat M^{i-1}_{T_i-T_{i-1}}| X_{T_{i-1}},X_{T_i})\\
	=&P_{i-1}(\ell^1)\mbb P(\ell^1/X_{T_{i-1}} < \widehat M^{i-1}_{T_i-T_{i-1}}| X_{T_{i-1}},X_{T_i}).
	\end{align*}
	Lemma \ref{lem:cond_prob_min} yields
	\begin{align*}
	\mbb P(\ell^1/X_{T_{i-1}} < \widehat M^{i-1}_{T_i-T_{i-1}}\,|\, X_{T_{i-1}},X_{T_i})=&1-\exp\left\{\frac{-2}{\sigma^2 (T_i-T_{i-1})}\ln\Big(\frac{\ell^1}{X_{T_{i-1}}}\Big)\ln\Big(\frac{\ell^1}{X_{T_i}}\Big)\right\}
\end{align*}
	for $\ell^1< \min(X_{T_{i-1}},X_{T_i})$ and zero otherwise. Note that
	\begin{align*}
	K_i(\ell) = 
	1-\exp\left\{\frac{-2}{\sigma^2 (T_i-T_{i-1})}\ln\left(\frac{\ell}{X_{T_{i-1}}}\right)\ln\left(\frac{\ell}{X_{T_i}}\right)\right\}, 
	\end{align*}
	for $\ell< \min(X_{T_{i-1}},X_{T_i})$ and $K_i(\ell) =0$ otherwise.
	Then the probability $P_i(\ell^1)$ can be recursively calculated by
	\begin{align*}
	P_i(\ell^1) &= \mbb P(\ell^1 < M_{T_i}|X_{T_0},\ldots,X_{T_i})=
	P_{i-1}(\ell^1)\mbb P(\ell^1/X_{T_{i-1}} < \widehat M^{i-1}_{T_i-T_{i-1}}| X_{T_{i-1}},X_{T_i})\\
	&=P_{i-1}(\ell^1)K_i(\ell^1)=P_{i-2}(\ell^1)K_{i-1}(\ell^1)K_i(\ell^1)=\ldots= \prod_{j=1}^i K_j(\ell^1),
	\end{align*}
	since $P_0(\ell^1)= \mbb P(\ell^1 < M_{T_0}|X_{T_0})=\mbb P(\ell^1 < X_{T_0}|X_{T_0})=\1{\ell^1<X_{T_0}}$.\\\\
	Eventually we obtain for $t\in [t_0,t_1)$ and $t\in [T_i,T_{i+1})$ that 
	\begin{align*}
	\mbb P(\tau >T|\cal F^D_t) &=\int_0^1 \int_{0}^{\infty} \int_0^1  \int_0^{uX_{T_i}} \int_0^{wvX_{T_i}} P_i(\ell^1)f_{L^1,L^2}(\ell^1,\ell^2)f^{\widehat M^j}_{T-T_j}(w)  f^{\widehat M^i,Y^i}_{T_j-T_i}(u,v)\d \ell^2 \d \ell^1\d w \d v \d u.	
	\end{align*}
	\myparagraph{Proof of \eqref{eq:sp_d_b}:} 
	Let $t\in [t_1,T)$ and $t\in [T_i,T_{i+1})$ with $i\geq j$, i.e., $t_1=T_j \leq T_{i}$. Then
	\begin{align*}
	\{\tau >T\}=&\{L^1 < M_{t_1}\}\cap \{ L^2 < M_{[t_1,T)} \}= \{L^1 < M_{T_j}\}\cap \{ L^2 < M_{[T_j,T)} \}\\
	=&\{L^1 < M_{T_j}\}\cap \{ L^2 < M_{[T_j,T_i)} \}\cap \{ L^2 < M_{[T_i,T)} \}\\
	=&\{L^1 < M_{T_j}\}\cap \{ L^2 < M_{[T_j,T_i)} \}\cap \{ L^2 < \inf_{T_i\leq s < T} X_s \}\\
	=&\{L^1 < M_{T_j}\}\cap \{ L^2 < M_{[T_j,T_i)} \}\cap \{ L^2 < \inf_{T_i\leq s < T} X_{T_i} Y^i_{s-T_i}\}\\
	=&\{L^1 < M_{T_j}\}\cap \{ L^2 < M_{[T_j,T_i)} \}\cap \{ L^2 < X_{T_i} \widehat M^i_{T-T_i}\}
	\end{align*}
	and we obtain
	\begin{align*}
	\mbb P(\tau >T|\cal F^D_t)=&\mbb P(\tau >T|\cal F^D_{T_i}) =\mbb P(\tau >T|X_{T_0},\ldots,X_{T_i})\\
	=&\mbb P(L^1 < M_{T_j}, L^2 <M_{[T_j,T_i)}, L^2 < \widehat M^i_{T-T_i}X_{T_i}|X_{T_0},\ldots,X_{T_i})\\
	=&\int\limits_0^\infty\int\limits_0^\infty\mbb P(\ell^1 < M_{T_j}, \ell^2 <M_{[T_j,T_i)}, \ell^2 < \widehat M^i_{T-T_i}X_{T_i}|X_{T_0},\ldots,X_{T_i})f_{L^1,L^2}(\ell^1,\ell^2)\d \ell^2 \d \ell^1\\
	=&\int\limits_0^\infty\int\limits_0^\infty \mbb P(\ell^1 <M_{T_j},\ell^2 <M_{[T_j,T_i)}|X_{T_0},\ldots,X_{T_i}) \mbb P(\ell^2/X_{T_i} < \widehat M^i_{T-T_i}) f_{L^1,L^2}(\ell^1,\ell^2)\d \ell^2 \d \ell^1\\
	=&\int\limits_0^\infty\int\limits_0^\infty \mbb P(\ell^1 <M_{T_j},\ell^2 <M_{[T_j,T_i)}|X_{T_0},\ldots,X_{T_i})\Psi\left(T-T_i,\frac{\ell^2}{X_{T_i}}\right) f_{L^1,L^2}(\ell^1,\ell^2)\d \ell^2 \d \ell^1\\
	=&\int\limits_0^\infty\int\limits_0^\infty  \prod_{k=1}^j K_k(\ell^1) \prod_{k=j+1}^i K_k(\ell^2)\Psi\left(T-T_i,\frac{\ell^2}{X_{T_i}}\right)f_{L^1,L^2}(\ell^1,\ell^2)\d \ell^2 \d \ell^1\\
	=&\int\limits_0^\infty\int\limits_0^\infty  P_j(\ell^1) \prod_{k=j+1}^i K_k(\ell^2)\Psi\left(T-T_i,\frac{\ell^2}{X_{T_i}}\right)f_{L^1,L^2}(\ell^1,\ell^2)\d \ell^2 \d \ell^1,
	\end{align*}
	where $\Psi(t,\cdot)$ is the complementary distribution function of $M_t$ given in Lemma \ref{compl_distr}. 
	\\
	\myparagraph{Proof of \eqref{eq:sp_d_c}:} 
	Let $t\in [t_0,t_1)$ and $t\in [T_i,T_{i+1})$. Then
	\begin{align*}
	\mbb P(\tau > t |\cal F^D_t)=& \mbb P(L^1 < M_{t}|\cal F^D_t) = \mbb P(L^1 < M_{t}|\cal F^D_{T_i})=\mbb P(L^1 < M_{t}|X_{T_0},\ldots,X_{T_i})\\
	=&\mbb P(L^1 < M_{T_i}, L^1 < M_{[T_i,t)}|X_{T_0},\ldots,X_{T_i})=\mbb P(L^1 < M_{T_i}, L^1 < \widehat M^i_{t-T_i}X_{T_i}|X_{T_0},\ldots,X_{T_i})\\
	=&\int_0^{\infty} \mbb P(\ell^1 < M_{T_i}, \ell^1 < \widehat M^i_{t-T_i}X_{T_i}|X_{T_0},\ldots,X_{T_i}) f_{L^1}(\ell^1)\d \ell^1\\
	=&\int_0^{\infty} \mbb P(\ell^1 < M_{T_i}|X_{T_0},\ldots,X_{T_i})\mbb P(\ell^1/X_{T_i} < \widehat M^i_{t-T_i}) f_{L^1}(\ell^1)\d \ell^1\\
	=&\int_0^{\infty}  P_i(\ell^1) \Psi\left(t-T_i,\frac{\ell^1}{X_{T_i}}\right)f_{L^1}(\ell^1)\d \ell^1. 
	\end{align*}
	The second to last equation follows by the independence from $\widehat M^i_{t-T_i}$ of $T_i$ and the last equation holds since $\widehat M_{t-T_i}^i \overset{d}{=} M_{t-T_i}$.\\ 
	For $t\in [t_1,T)$ and $t\in [T_i,T_{i+1})$ with $i\geq j$, i.e., $t_1=T_j \leq T_{i}$, it holds 
	\begin{align*}
	\{\tau >T\}&=\{L^1 < M_{t_1}\}\cap \{ L^2 < M_{[t_1,t)} \}= \{L^1 < M_{T_j}\}\cap \{ L^2 < M_{[T_j,t)} \}\\
	&=\{L^1 < M_{T_j}\}\cap \{ L^2 < M_{[T_j,T_i)} \}\cap \{ L^2 < M_{[T_i,t)} \}\\
	&=\{L^1 < M_{T_j}\}\cap \{ L^2 < M_{[T_j,T_i)} \}\cap \{ L^2 < \inf_{T_i\leq s < t} X_s \}\\
	&=\{L^1 < M_{T_j}\}\cap \{ L^2 < M_{[T_j,T_i)} \}\cap \{ L^2 < \inf_{T_i\leq s < t} X_{T_i} Y^i_{s-T_i}\}\\
	&=\{L^1 < M_{T_j}\}\cap \{ L^2 < M_{[T_j,T_i)} \}\cap \{ L^2 < X_{T_i} \widehat M^i_{t-T_i}\}.
	\end{align*}
	Using the same arguments as above yields		
	\begin{align*}
	\mbb P(\tau >t|\cal F^D_t) =&\mbb P(\tau >t|\cal F^D_{T_i}) =\mbb P(\tau >t|X_{T_0},\ldots,X_{T_i})\\
	=&\mbb P(L^1 < M_{T_j}, L^2 <M_{[T_j,T_i)}, L^2 < \widehat M^i_{t-T_i}X_{T_i}|X_{T_0},\ldots,X_{T_i})\\
	=&\int\limits_0^\infty\int\limits_0^\infty\mbb P(\ell^1 < M_{T_j}, \ell^2 <M_{[T_j,T_i)}, \ell^2 < \widehat M^i_{t-T_i}X_{T_i}|X_{T_0},\ldots,X_{T_i}) f_{L^1,L^2}(\ell^1,\ell^2)\d \ell^2 \d \ell^1\\
	=&\int\limits_0^\infty\int\limits_0^\infty \mbb P(\ell^1 <M_{T_j},\ell^2 <M_{[T_j,T_i)}|X_{T_0},\ldots,X_{T_i})\mbb P(\ell^2/X_{T_i} < \widehat M^i_{t-T_i}) f_{L^1,L^2}(\ell^1,\ell^2)\d \ell^2 \d \ell^1\\
	=&\int\limits_0^\infty\int\limits_0^\infty  P_j(\ell^1) \prod_{k=j+1}^i K_k(\ell^2)\Psi\left(t-T_i,\frac{\ell^2}{X_{T_i}}\right)f_{L^1,L^2}(\ell^1,\ell^2)\d \ell^2 \d \ell^1.
	\end{align*}

\bibliographystyle{abbrv}

\begin{thebibliography}{10}
	
	\bibitem{ArtznerDelbaen}
	P.~Artzner and F.~Delbaen.
	\newblock Default risk insurance and incomplete markets.
	\newblock {\em Mathematical Finance}, 5:187 -- 195, 12 2006.
	
	\bibitem{BlackCox}
	F.~Black and J.~Cox.
	\newblock Valuing corporate securities: Some effects of bond indenture
	provisions.
	\newblock {\em The Journal of Finance}, 31(2):351--367, 1976.
	
	\bibitem{black}
	F.~Black and M.~Scholes.
	\newblock The pricing of options and corporate liabilities.
	\newblock {\em Journal of Political Economy}, 81(3):637--654, 1973.
	
	\bibitem{blanchet-hillairet}
	C.~Blanchet-Scalliet, C.~Hillairet, and Y.~Jiao.
	\newblock Successive enlargement of filtrations and application to insider
	information.
	\newblock {\em Advances in Applied Probability}, 49(3):653–685, 2017.
	
	\bibitem{bluhm}
	C.~Bluhm and L.~Overbeck.
	\newblock {\em Structured Credit Portfolio Analysis, Baskets and CDOs}.
	\newblock Chapman and Hall/CRC Financial Mathematics Series. CRC Press, 2006.
	
	\bibitem{duffie-lando}
	D.~Duffie and D.~Lando.
	\newblock Term structures of credit spreads with incomplete accounting
	information.
	\newblock {\em Econometrica}, {\bf 69}:633--664, 2001.
	
	\bibitem{DuffieSingleton}
	D.~Duffie and K.~Singleton.
	\newblock Modeling term structures of defaultable bonds.
	\newblock {\em The Review of Financial Studies}, 12(4):687--720, 1999.
	
	\bibitem{giesecke-goldberg}
	K.~Giesecke and L.~R. Goldberg.
	\newblock The market price of credit risk : the impact of asymmetric
	information.
	\newblock Working paper, 2008.
	
	\bibitem{harrison}
	J.~Harrison.
	\newblock {\em Brownian motion and stochastic flow systems}.
	\newblock John Wiley \& Sons, 1985.
	
	\bibitem{hillairet-jiao2}
	C.~Hillairet and Y.~Jiao.
	\newblock Information asymmetry in pricing of credit derivatives.
	\newblock {\em International Journal of Theoretical and Applied Finance}, 14,
	02 2010.
	
	\bibitem{hillairet-jiao}
	C.~Hillairet and Y.~Jiao.
	\newblock Credit risk with asymmetric information on the default threshold.
	\newblock {\em Stochastics}, { 84(2-3)}:183--198, 2012.
	
	\bibitem{jarrowprotter}
	R.~Jarrow and P.~Protter.
	\newblock Structural versus reduced form models: A new information based
	perspective.
	\newblock {\em J. Invest. Manag.}, 2, 07 2004.
	
	\bibitem{jarrow_turnbull}
	R.~Jarrow and S.~Turnbul.
	\newblock Credit risk: Drawing the analogy.
	\newblock {\em Risk Magazine}, 9(5), 1992.
	
	\bibitem{jeanblanc-valchev}
	M.~Jeanblanc and S.~Valchev.
	\newblock Partial information and hazard process.
	\newblock {\em International Journal of Theoretical and Applied Finance}, {\bf
		8(6)}:807--–838, 2005.
	
	\bibitem{jeanblanc-yor}
	M.~Jeanblanc, M.~Yor, and M.~Chesney.
	\newblock Mathematical methods for financial markets.
	\newblock {\em Finance}, 31(1):81--85, 2010.
	
	\bibitem{lakner-liang}
	P.~Lakner and W.~Liang.
	\newblock Optimal investment in a defaultable bond.
	\newblock {\em Mathematics and Financial Economics}, {\bf 1(3-4)}:283--310,
	2008.
	
	\bibitem{lando}
	D.~Lando.
	\newblock On {C}ox processes and credit risky securities.
	\newblock {\em Review of Derivatives Research}, {\bf 2}:99--120, 1998.
	
	\bibitem{merton_cr}
	R.~Merton.
	\newblock On the pricing of corporate debt: The risk structure of interest
	rates.
	\newblock {\em The Journal of Finance}, 29(2):449--470, 1974.
	
	\bibitem{Monegato}
	G.~Monegato.
	\newblock A note on extended {G}aussian quadrature rule.
	\newblock {\em Mathematics of Computation}, 30(136):812--817, 1976.
	
	\bibitem{redeker}
	I.~Redeker.
	\newblock {\em Stochastic models in financial risk management}.
	\newblock PhD thesis, Brandenburg University of Technology Cottbus-Senftenberg,
	available at
	\url{https://opus4.kobv.de/opus4-btu/files/4801/redeker_imke.pdf}, 2019.
	
\end{thebibliography}

\end{document}